\newtheorem{axiom}{Axiom}
\newtheorem{definition}{Definition}
\newtheorem{theorem}{Theorem}
\newtheorem{Property}{Property}
\newtheorem{Proposition}{\bf Proposition}
\newtheorem{Corollary}{Corollary}
\newtheorem{Conjecture}{Conjecture}
\title{System Information Decomposition
}
\author[1,2]{Aobo Lyu*}
\author[2]{Bing Yuan}
\author[3]{Ou Deng}
\author[4,2]{Mingzhe Yang}
\author[4,2]{Jiang Zhang*}
\affil[1]{Department of Electrical and Systems Engineering, Washington University in St. Louis, St. Louis, Missouri, United States of America, 63130}
\affil[2]{Swarma Research, Beijing, China, 102308}
\affil[3]{Graduate School of Human Sciences, Waseda University, Tokorozawa city, Saitama, Japan, 359-1192}
\affil[4]{School of Systems Science, Beijing Normal University, Beijing, China, 100875}
\begin{document}
\maketitle

\begin{abstract}

To characterize the complex higher-order interactions among variables within a system, this study introduces a novel framework, termed System Information Decomposition (SID), aimed at decomposing the information entropy of variables into information atoms based on their interrelations. Diverging from the established Partial Information Decomposition (PID) framework, which predominantly concentrates on the directional interactions stemming from an array of source variables to a single target variable, SID adopts a holistic approach, scrutinizing the interactions across all variables within the system. Specifically, we proved all the information atoms are symmetric, which means the disentanglement of unique, redundant, and synergistic information from any specific target variable. Hence, our proposed SID framework can capture the symmetric pairwise and higher-order relationships among variables. This advance positions SID as a promising framework with the potential to foster a deeper understanding of higher-order relationships within complex systems across disciplines.

\end{abstract}

\keywords{Information decomposition \and Information entropy \and Complex systems \and Multivariate system \and System decomposition}

\section{Introduction}
\label{sec:Introduction}

Systems Science is a multidisciplinary field investigating the relationships and interactions among internal variables within a system, with applications spanning neuroscience, biology, social sciences, engineering, and finance \cite{castellani2009sociology, bertuglia2005nonlinearity}. Complex systems are defined by many interconnected variables that engage in intricate interactions, the understanding of which is critical for predicting emergent properties, devising novel treatments, and optimizing system performance. 

In the field of information theory, mutual information is a widely employed method for quantifying interactions between two variables by encapsulating shared information or the reduction in uncertainty facilitated by each variable \cite{shannon2001mathematical}. However, mutual information is restricted to describing pairwise interactions, which often proves inadequate for analyzing complex systems that necessitate multivariate interaction assessments. 

As a solution, Williams and Beer introduced the Partial Information Decomposition (PID) method, which characterizes information interactions between a target variable and multiple source variables by decomposing the mutual information shared among them \cite{williams2010nonnegative}. In the past ten years, PID and related theories, such as Information Flow Modes \cite{james2018modes} and integrated information theory \cite{mediano2019beyond}, have been applied in many fields, such as quantitative identification of Causal Emergence \cite{rosas2020reconciling}, dynamical process analysis \cite{james2016information} and information disclosure \cite{rosas2020operational, rassouli2019data}. However, PID-related techniques only decompose the partial information of a single target variable at a time. This leads to the fact that selecting or constructing a suitable and plausible target variable can be challenging or even unfeasible when addressing complex systems problems, and also raising questions as to why certain variables are prioritized as targets over others. Moreover, this variable-specific perspective results in a unidirectional relationship between the specified target variable and source variable, which makes information atoms bound to a specific target variable and insufficient for a comprehensive description of the relationships among variables. This further limits our exploration of system functions and properties, as many of them originate from the relationship between system variables rather than specific variables or its asymmetric properties.

To overcome these limitations, we need a system analysis method based on a system perspective, analogous to the synchronization model \cite{acebron2005kuramoto} or the Ising model \cite{cipra1987introduction}, rather than a variable perspective like PID. Furthermore, this method should capture the nature and characteristics of the system without specifying or introducing any special variable, and also take into account all the interactive relationships among all variables in the system, including pairwise and higher-order relationships. Therefore, we propose System Information Decomposition (SID), an innovative method based on PID that treats all system variables equally (target-free) and effectively captures their intricate interactions. This novel approach enhances our capacity to scrutinize and understand the complexities of multivariate systems.

Specifically, we firstly expand the PID's conceptual framework to a system horizon by taking all variables in the system as target variable separately. Then,  we prove the symmetry properties of information decomposition based on a set theory perspective of information theory. That means the value of information atoms, the non-overlapping units obtained by decomposing variables' information entropy according to their relationship, will not be affected by the the choice of target variable. Therefore, we put forward a general SID framework, wherein redundant, synergistic, and unique information atoms become a multivariate system's property, reflecting the complex (pairwise and higher-order) relationships among variables. Furthermore, we explore the connections between existing information entropy indicators and the information atoms within the SID framework while proposing the necessary properties for information atom quantification and several variable calculation approaches. Through a detailed case analysis, we provide an intuitive demonstration that SID can unveil higher-order relationships within the system that cannot be captured by existing probability or information measures. Finally, we discuss the potential application scenarios and implications of SID from the philosophical perspective of system decomposition as well as from areas such as Higher-order Networks and theory of Causality.

Our contributions to Information and System Science are twofold. Firstly, the SID framework broadens the application of information decomposition methods in complex systems by introducing a methodology to decompose all variables' entropy within a system. This achievement also unifies information entropy and information decomposition onto one Venn diagram, where three variables can be well represented on a two-dimensional plane. Secondly, this framework reveals previously unexplored higher-order relationship that cannot be represented by existing probability or information measures, providing a potential data-driven quantitative framework for Higher-order Networks related research. 

The remainder of this paper is organized as follows. Section \ref{sec:Literature Review} reviews the development of information theory, PID and related research. Section \ref{sec:System Information Decomposition} extends the PID method to multivariate system scenarios, defines SID, shows the connections between existing information entropy indicators and the information atom. Section \ref{sec:Case Studies} presents the characteristics of the SID framework through a case analysis. Then, Section \ref{sec:Calculation of SID} gives the properties for information atom calculation and their possible calculation approaches. The significance and potential applications of SID are discussed in Section \ref{sec:Discussion}.

\section{Information Decomposition}
\label{sec:Literature Review}

\subsection{Information Theory Framework}
\label{sec:Information Theory Framework}

Shannon's classical information theory has provided a robust foundation for understanding information entropy \cite{shannon2001mathematical}. Mutual information and conditional entropy further decompose information and joint entropy according to the pairwise relationship between variables, which can be intuitively shown in Venn diagrams \ref{fig:Info}. In this paper, we explore the potential of Venn diagrams to provide valuable insights into the decomposition of multivariate systems and extend the entropy decomposition approach of classical information theory.

\begin{figure}[htbp]
    \centering
    \fbox{\includegraphics[width=.5\linewidth]{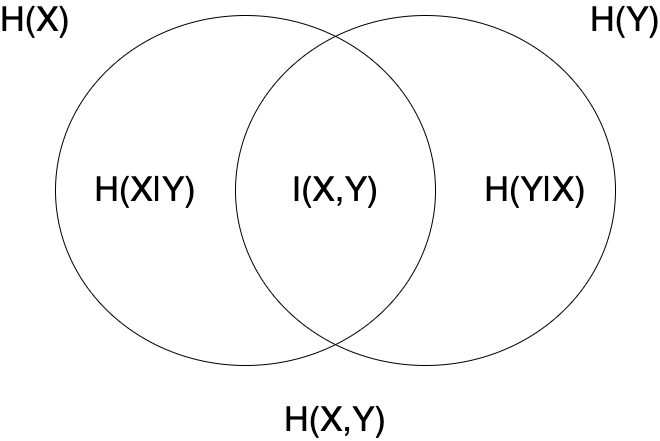}}
    \caption{Information Theory Venn Diagram.}
    \label{fig:Info}
\end{figure}

\subsection{Partial Information Decomposition Framework}
\label{sec:Partial Information Decomposition Framework}

In classical information theory, the joint mutual information may occasionally be larger or smaller than the sum of the mutual information between individual variables. Consequently, traditional redundant information calculations may yield negative values \cite{schneidman2003synergy, barrett2015exploration}, contradicting our intuitive understanding. To address this phenomenon, Williams and Beer proposed the Partial Information Decomposition (PID) framework \cite{williams2010nonnegative}.

The PID framework facilitates the decomposition of joint mutual information between multiple source variables and a target variable. Specifically, for a random target variable $Y$ and a random source variables $X={X_{1}, X_{2}, \cdots, X_{n}}$, the PID framework allows for the decomposition of the information that $X$ provides about $Y$ into information atoms, such as redundant, synergistic and unique information. These atoms represent the partial information contributed by various subsets of $X$, individually or jointly, providing a more nuanced understanding of the relationships between the target and source variables.

Considering the simplest case of a system with three variables, one can employ a Venn diagram to elucidate their interactions \cite{williams2010nonnegative}. The unique information $Un(Y:X_1)$ from $X_1$ signifies the information that $X_1$ provides to $Y$, which is not provided by $X_2$ and vice versa. In other words, unique information refers to the contribution made by a specific source variable to the target variable that is exclusive to that variable and not shared by other source variables. Redundant information $Red(Y:X_1,X_2)$ represents the common or overlapping information that $X_1$ and $X_2$ provide to $Y$. Synergistic information $Syn(Y:X_1,X_2)$ captures the combined contribution of $X_1$ and $X_2$ to $Y$, which cannot be obtained from either variable individually. Formaly, it can be write in the form: $I(Y: X_{1}, X_{2}) = Syn(Y: X_{1}, X_{2}) + Red (Y: X_{1}, X_{2})+ Un(Y: X_{1}) + Un(Y: X_{2})$.
\begin{figure}[htbp]
    \centering
    \fbox{\includegraphics[width=.5\linewidth]{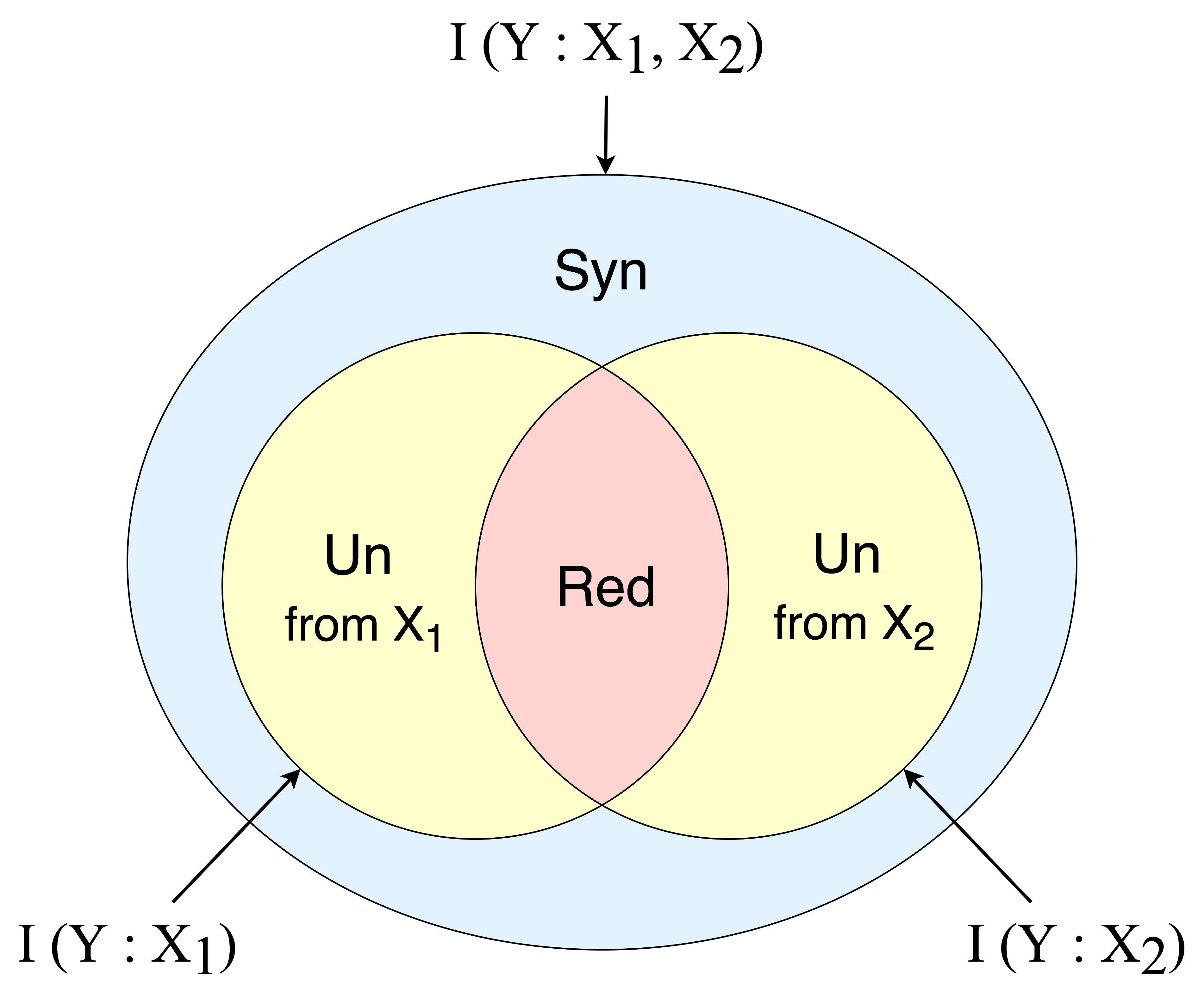}}
    \caption{Venn Diagram of PID.}
    \label{fig:PID_Venn}
\end{figure}

For an arbitrary multivariate system, we can select any variable as the target variable $Y$ and the remaining variables as the source variables ${X_{1},\cdots ,X_{n}}$. The redundant information $Red(Y:{X_{1},\cdots ,X_{n}})$ denotes the common or overlapping information provided by the source variables \cite{williams2010nonnegative}, which is contained in each source \cite{kolchinsky2022novel}.
Besides, Redundant information has the following properties \cite{williams2010nonnegative}:

\begin{axiom} [Symmetry of source variables]
\label{Axiom:Symmetry of source variables}
\(Red(Y : {X})\) is invariant to the permutation of X.
For the source variables $X_{i}$ and $X_{j}$ from $\{X_{1}, \cdots ,X_{n}\}$,$i,j \in \{1 \cdots n\} $ , there is $Red(Y: X_{i}, \cdots  X_{j}) = Red(Y: X_{j}, \cdots  X_{i})$.
\end{axiom}

\begin{axiom} [Self-redundancy]
\label{Axiom:Self-redundancy}
When there is only one source variable, the redundant information is equivalent to the mutual information between the target variable $Y$ and the source variable $X_{i}$, i.e. $Red(Y:X_{i}) = I(Y:X_{i})$.
\end{axiom}

\begin{axiom} [Monotonicity]
\label{Axiom:Monotonicity}
The redundancy should exhibit a monotonically decreasing behavior with the inclusion of additional inputs, i.e. 
$Red(Y: X_{1}, \cdots ,X_{n}) \le  Red(Y: X_{1}, \cdots ,X_{n-1})$, where $n \in N$.
\end{axiom}

Despite numerous quantitative methods for information atoms in PID, a widely accepted method still needs to be discovered, primarily due to negative solutions. Such inconsistencies undermine the notion of information entropy as a non-negative measure of uncertainty. To circumvent reliance on a specific quantitative method, we employ classical mutual information and conditional entropy for calculating the sum of the information entropy of certain information atoms. Although this approach does not permit the precise calculation of individual information atoms \cite{williams2010nonnegative, williams2011information}, it ensures that the framework remains independent of how PID atoms are calculated. Consequently, when a particular PID calculation method computes the value of one information atom, the information entropy of the remaining information atoms is determined by the following Axiom:

\begin{axiom} [Quantitative Computation] 
\label{axiom:Quantitative Computation}
In a system with a target variable $Y$ and source variables $X_{i}$ and $X_{j}$, the following relationships hold:
\begin{align}
\label{align:RedUn}
Un(Y: X_{i}) = I (X_{i}: Y) - Red(Y: X_{i}, X_{j})
\end{align}
\begin{align}
Syn(Y: X_{i}, X_{j})= H (Y | X_{j}) - H (Y | X_{i}, X_{j}) - Un(Y: X_{i}) 
\end{align}

The above rules can also be extended to general system with a target variable $Y$ and source variables $X_{1}, \cdots, X_{n}$, where $(X_{1}, \cdots, X_{n}\setminus X_{i})$ represents the union variable of $X_{1}, \cdots, X_{n} $ without$ X_{i})$.
\begin{align}
Un(Y: X_{i}) = I (X_{i}: Y) - Red(Y: X_{i}, (X_{1}, \cdots, X_{n}\setminus X_{i}))
\end{align}
\begin{align}
Syn(Y: X_{1},\cdots, X_{n})
= H (Y | (X_{1}, \cdots, X_{n}\setminus X_{i})) - H (Y | X_{1}, \cdots, X_{n}) - Un(Y: X_{i}) 
\end{align}

\end{axiom}

Although several enlightening perspectives on PID have been proposed \cite{williams2010nonnegative, griffith2014intersection, ince2017measuring, bertschinger2013shared, harder2013bivariate, bertschinger2014quantifying}, there is still no perfect calculation methods. To make our work not rely on any specific computational method, we need to explore information decomposition and the properties of information atoms from a more conceptual perspective. Given the high similarity between information decomposition, especially the concept of redundant information, and the concept of inclusion and overlapping, set theory may allow us to explore the properties of redundant more deeply.

\subsection{A Set-theoretic Understanding of PID}
\label{sec:A set-theoretic understanding of PID}

Kolchinsky's remarkable work \cite{kolchinsky2022novel} offers an understanding based on set theory. Given that PID is inspired by an analogy between information theory and set theory \cite{williams2011information}, the redundant information can be understood as information sets that the sources provide to the target. More specifically, the definition of set intersection $\cap \{X_{i}\}$ in set theory means the largest set that is contained in all of the $X_{i}$, and these set-theoretic definitions can be mapped into information-theoretic terms by treating “sets” as random variables, “set size” as entropy, and “set inclusion” as an ordering relation $\sqsubset$, which indicates when one random variable is more informative than another. 

Considering a set of sources variables \(X_{1}, . . . ,X_{n}\) and a target $Y$, PID aims to decompose $I(Y: X_{i}, X_{j})$ and get $Red(Y: X_{1}, \cdots ,X_{n})$, the total same information provided by all sources about the target, into a set of non-negative terms. Therefore, redundant information can be viewed as the "intersection" of the information contributed by different sources, leading to the following definition:

\begin{definition}[Set Intersection of Information \cite{kolchinsky2022novel} ]
    
\label{definition:Set Intersection of Information}
For a variable-system, the redundant information from the source variables $X_{1}, \cdots ,X_{n}$ to the target variable $Y$ is the information that all source variables can provide to the target variable, the largest mutual information between the target variable and a non-unique variable $Q$ that has an ordering relation $\sqsubset$ with all source variables. That is 
\begin{align}
Red(Y:X_{1},\cdots, X_{n})= I_{\cap} (X_{1},\cdots,X_{n}\to Y) &:= \sideset{}{}\sup_Q \{I(Q:Y):Q\sqsubset X_{i},\forall i\in \{1\cdots n\} \}
\end{align}
\end{definition} 

The ordering relation $\sqsubset$ is an analogy to the relation contained $\subseteq$ in set theory, which is not specified but follows some assumptions: i) Monotonicity of mutual information, $A \sqsubset B \Rightarrow I(A:Y) \le I(B:Y) $. ii) Reflexivity: $A \sqsubset A$ for all variable $A$. iii) For all sources $X_{i}$, $O \sqsubset X_{i} \sqsubset (X_{1},\cdots, X_{n})$, where $H(O) = 0$ and $(X_{1},\cdots, X_{n})$ indicates all sources considered jointly. For example, the partial order can be Q $\sqsubset$ X if and only if $H(Q|X)=0$, or the well-known Blackwell order \cite{blackwell1953equivalent}, such that $Q$ precedes $X_i$ if $X_i$ has all of the information that $Q$ has, about some third variable $Y$.

\section{System Information Decomposition}
\label{sec:System Information Decomposition}
In this section, we develop a mathematical framework of SID. The objective of this framework is to decompose the information of all variables within a system based on their interrelationships. By addressing the limitation of PID, which focuses solely on a single target variable, we progress towards multi-variable information decomposition for systems.

\subsection{Extension of PID in a System Scenario}
\label{sec:Extension of PID in system scenario}

The PID method only decomposes joint mutual information between multiple source variables and a specific target variable, as illustrated by the outermost circle of the Venn diagram in Figure \ref{fig:PID_Venn}. We redesign the Venn diagram to extend this method and encompass a system-wide perspective, as demonstrated in Figure \ref{fig:Venn_1.0}. The system comprises two source variables,$X_{1}$ and $X_{2}$, and one target variable, $Y$, represented by the three intersecting circles. 

The area size within the figure signifies the information entropy of the variables or information atoms, and the central area denotes the joint mutual information, encompassing redundant, unique from $X_{1}$, unique from $X_{2}$, and synergistic information. This arrangement aligns with the Venn diagram framework of PID.

\begin{figure}[htbp]
\centering
\fbox{\includegraphics[width=.5\linewidth]{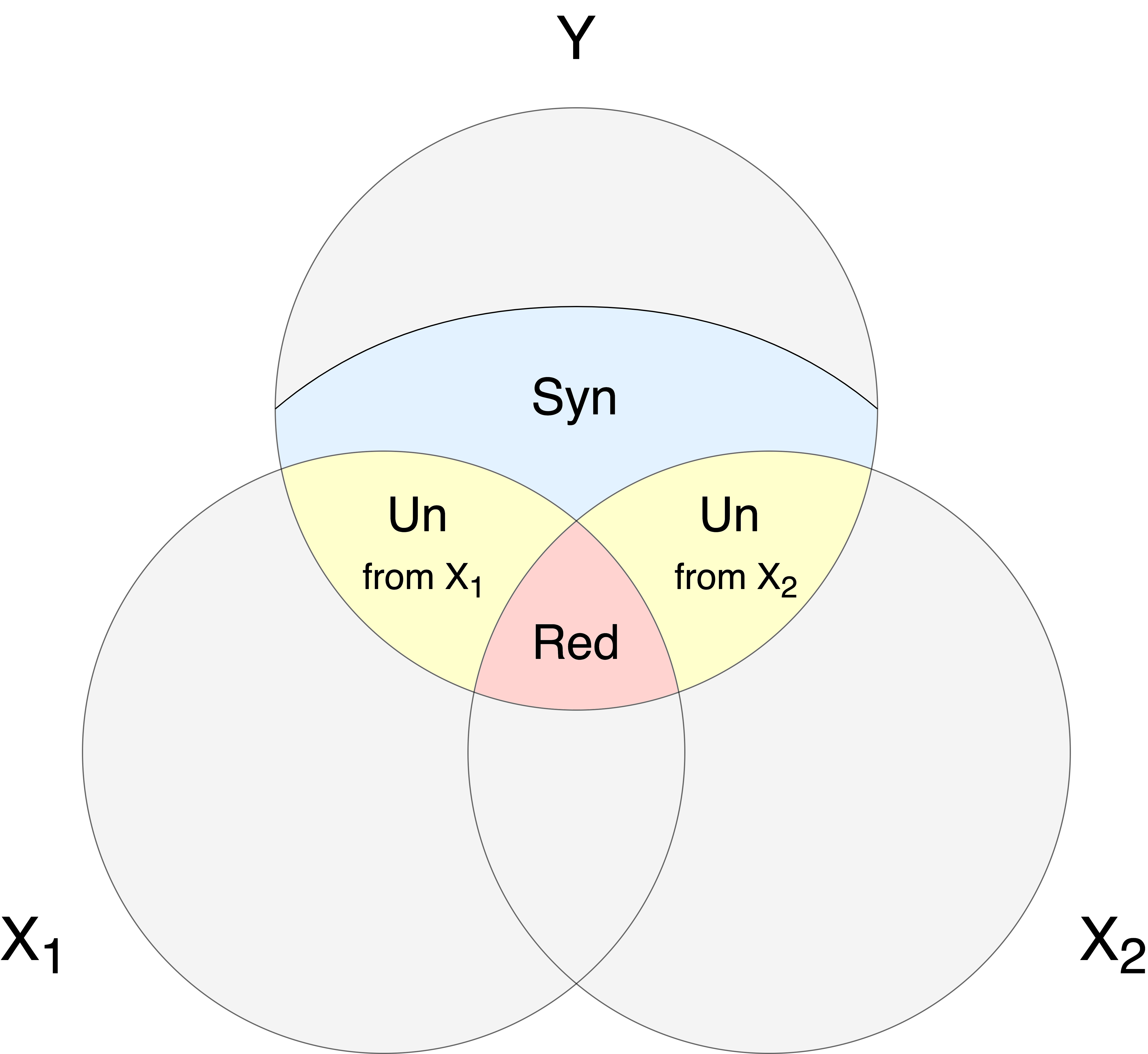}}
\caption{Venn diagram from different perspectives of PID.}
\label{fig:Venn_1.0}
\end{figure}

To enhance the comprehensiveness of the framework, it is necessary to elucidate the unexplored section of the updated Venn diagram \ref{fig:Venn_1.0}. In addition to the four sections of joint mutual information, the information entropy of the target variable $Y$ contains an unaccounted-for area. According to Shannon's formula, this area corresponds to the joint conditional entropy of the source variables to the target variable $H(Y| X_{1}, X_{2})$, which also characterizes the interrelationships between the target variable and the source variables. In the SID framework, numerous joint conditional entropy exist, including one that stands out: the joint conditional entropy originating from all variables except the target variable. To optimize the usefulness of the SID framework, we define this specific joint conditional entropy as the target variable's external information ($Ext$). The definition is grounded in the philosophical assumption that everything is interconnected. Since joint conditional entropy implies the uncertainty that cannot be eliminated by the internal variables of the system, the variables capable of providing this information must exist outside the system. To some extent, external information can emphasize the relationship between the target variable and the entire system rather than just a simple relationship with other variables. Therefore, we also consider it a kind of information atom within the SID framework.

\begin{definition}[External Information]
\label{definition:External Information}
For a system containing variables $Y$ and $\{X_{1}, \cdots, X_{n}\}$, the external information $Ext(Y)$ is defined as: 
\begin{align}
Ext(Y)=H(Y|X_{1}, X_{2}, \cdots, X_{n})
\end{align}
\end{definition}

Thus, we have been able to decompose the target variable's entropy into a finite number of non-repeated information atoms according to the relationship between it and the other variables in the system. Furthermore, we can apply this information decomposition method to each variable in the system to decompose the entire information entropy of the system, which results in a preliminary version of the SID. For the convenience of expression, we use $Un_{i-j}$, $Syn_{ij-k}$, and $Red_{ij-k}$ to represent $Un(X_j,X_i)$, $Syn(X_k:X_i,X_j)$, and $Red(X_k:X_i,X_j)$ respectively. A Venn diagram for a three-variable system is shown in Figure \ref{fig:Venn_1.5}:

\begin{figure}[htbp]
\centering
\fbox{\includegraphics[width=.5\linewidth]{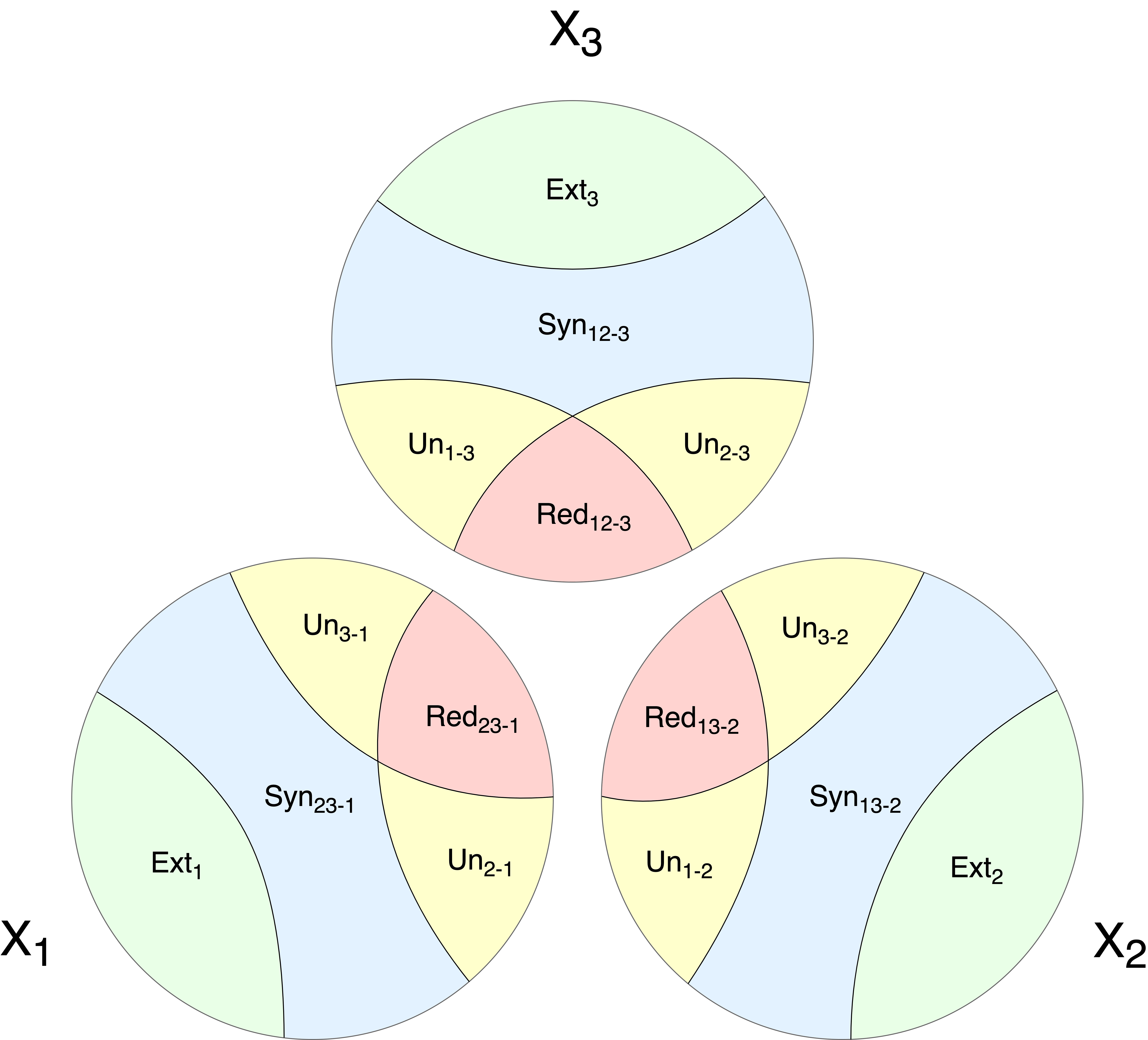}}
\caption{Venn diagram of SID's Preliminary version.}
\label{fig:Venn_1.5}
\end{figure}

\subsection{Properties of Information Atoms}
\label{sec:Properties of information atoms}

Although the preliminary version of SID can decompose all variables in a system, the decomposition of each variable is carried out separately, and the description of information atoms is directional (from source variables to the target variable). For instance, the unique information provided by $X_1$ to $X_3$ in Fig. \ref{fig:Venn_1.5} is not directly related to the unique information provided by $X_3$ to $X_1$. To make information atoms better reflect the relationship among variables and unifies the Venn diagram of Shannon's framework \ref{sec:Information Theory Framework} and PID framework \ref{sec:Partial Information Decomposition Framework}, it is necessary to further explore the properties of information atoms within the SID framework. In this subsection, we are going to prove the symmetry property of information atoms by demonstrating that unique, redundant, and synergistic information atoms remain stable when different variables are considered as target variables.

\begin{theorem}[Symmetry of Redundant Information]
\label{Theorem:Symmetry of Redundant Information}
Let $X_{1},\cdots , X_{n}$ be the variables in a system. The redundant information is equal irrespective of the chosen target variable. Formally, we write $Red(X_{i}:{X_{1},\cdots , X_{n}}\setminus X_{i}) = Red(X_{j}:{X_{1},\cdots , X_{n}}\setminus X_{j}), \forall i, j \in \{1\cdots n\}$.
\end{theorem}

To proof this Theorem, we use the Definition \ref{definition:Set Intersection of Information} with the Blackwell partial order, such that $Q$ precedes $X_i$ if $X_i$ has all of the information that $Q$ has, about the target variable $Y$, which written in the form $Q \sqsubset _Y$$X_i$. (In Appendix \ref{Proof of Theorem 1 with different partial order}, we also provide a proof based on another partial ordered.)

\begin{proof}Suppose we have a multivariate system containing a target variable $Y$ and source variables $X_{1},\cdots , X_{n}$. For the convenience of expression, we use $\mathcal{X}$ to represent all the source variables $X_{1},\cdots , X_{n}$. The proof is to show that $Red(Y: \mathcal{X},Y) = Red(Y;\mathcal{X})$ and $ Red(U: \mathcal{X}, Y) = Red(Y: \mathcal{X}, Y) $, where $U$ is the union variable of $Y$ and $\mathcal{X}$, such that $U = (\mathcal{X}, Y)$. (The entropy of the union variable $U$ can be expressed as $H(U) = H(\mathcal{X}, Y)$.) Then, we can demonstrate that redundant information is equal regardless of which variable is chosen as the target variable.

\underline{Step One, to prove $Red(Y: \mathcal{X},Y) = Red(Y:\mathcal{X}):$} 

By Definition \ref{definition:Set Intersection of Information}, 

\begin{align}
Red(Y: \mathcal{X}) =\sup_{Q_{j}} \{I(Q_{j}:Y): Q_{j}\sqsubset_Y X_{i},\forall i\in \{1\cdots n\} \} 
\end{align}
According to Blackwell order, $Q_j \sqsubset_Y Y$, since $Y$ has all of the information about $Y$. Then, we have: 
\begin{align}
\sup_{Q_{j}} \{I(Q_{j}:Y): Q_{j}\sqsubset_Y X_{i},\forall i\in \{1\cdots n\} \} = \sup_{Q_{j}} \{I(Q_{j}:Y): Q_{j}\sqsubset_Y Y, Q_{j}\sqsubset_Y X_{i}, \forall i\in \{1\cdots n\} \} 
\end{align}

Therefore, $Red(Y: \mathcal{X},Y) = Red(Y;\mathcal{X})$.

\underline{Step Two, to prove $ Red(U: \mathcal{X}, Y) = Red(Y: \mathcal{X}, Y) $:}

Building upon the conclusion that $Red(Y: \mathcal{X}, Y) = Red(Y: \mathcal{X})$, we can replace the target variable with the union variable $U = (\mathcal{X}, Y)$.

By Definition \ref{definition:Set Intersection of Information},
\begin{align}
\label{align:red}
Red(U: \mathcal{X}, Y) =\sup_{Q_{j}} \{I(Q_{j}:U):Q_{j}\sqsubset_U Y , Q_{j}\sqsubset_U X_{i} ,\forall i\in \{1\cdots n\} \} 
\end{align}

Let $Q_{j}^{*}$ satisfies or infinitely approaches the above conditions:
\begin{align}
I(Q_{j}^{*}:U) &= Red(U: \mathcal{X}, Y) - \varepsilon, \forall \varepsilon > 0 
 \nonumber \\ &= \sup_{Q_{j}} \{I(Q_{j}:U):Q_{j}\sqsubset_U Y, Q_{j}\sqsubset_U X_{i},\forall i\in \{1\cdots n\}\} - \varepsilon, \forall \varepsilon > 0, \nonumber 
\end{align}

Since $U=(\mathcal{X},Y) (H(Y|U)=0)$, then
$I(Q_{j}^{*}:U) \ge I(Q_{j}^{*}:Y) $. Considering that $Q_{j}^{*} \sqsubset_U Y$, which means $Y$ has all of the information that $Q_{j}^{*}$ has, about the target variable $U$, such that $I(Q_{j}^{*}:U) \le I(Q_{j}^{*}:Y) $, we have:
\begin{align}
I(Q_{j}^{*}:U) = I(Q_{j}^{*}:Y) \end{align}

Since $Y$ has all the information about itself, we have:
\begin{align}
Q_{j}^{*}\sqsubset_Y Y 
\end{align}

Since $U=(\mathcal{X},Y) (H(Y|U)=0)$ and $Q_{j}^{*}\sqsubset_U X_{i},\forall i\in \{1\cdots n\} \}$ ($X_i$ has all of the information that $Q_{j}^{*}$ has, about the target variable $U$), we have:
\begin{align} 
\label{align:QYX}
Q_{j}^{*}\sqsubset_Y X_{i},\forall i\in \{1\cdots n\} 
\end{align}

Therefore,  by Equation \ref{align:red}-\ref{align:QYX} and Definition \ref{definition:Set Intersection of Information}, we obtain:
\begin{align}
Red(U: \mathcal{X}, Y) &=\sup_{Q_{j}} \{I(Q_{j}:Y):Q_{j}\sqsubset_Y Y , Q_{j}\sqsubset_Y X_{i} ,\forall i\in \{1\cdots n\}\}  \nonumber\\
&= Red(Y: \mathcal{X}, Y)\nonumber
\end{align}

\underline{In Summary:}
Since we have established that $Red(Y: \mathcal{X}, Y) = Red(Y: \mathcal{X})$, and $ Red(U: \mathcal{X}, Y) = Red(Y: \mathcal{X}, Y)$, we can conclude that for all $X_{i}$ in $\{\mathcal{X}\}$, $Red(X_{i}: Y, \{\mathcal{X}\} \setminus  X_{i}) = Red(Y:\{\mathcal{X}\})$. Therefore, Theorem \ref{Theorem:Symmetry of Redundant Information} is proved, and we can use $Red(X_{1},\cdots, X_{n})$ or $Red_{1\cdots n}$ denote the redundant information within the system $\{X_{1},\cdots, X_{n}\}$.
\end{proof}

\begin{theorem} [Symmetry of Unique Information]
\label{Theorem:Symmetry of Unique Information}
Let $X_{1},\cdots, X_{n}$ be the variables in a system. In SID, the unique information of any two variables relative to each other is equal, regardless of which is chosen as the target variable. Formally, we write $Un(X_{i}:X_{j}) = Un(X_{j}:X_{i})$, $\forall i \ne j$ where $i, j \in \{1, \cdots, n\}$.
\end{theorem}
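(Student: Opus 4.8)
The plan is to reduce the claim to the three-variable identities already recorded in the Quantitative Computation axiom (Axiom~\ref{axiom:Quantitative Computation}) and then invoke the symmetry of redundant information just established in Theorem~\ref{Theorem:Symmetry of Redundant Information}. Fix two distinct indices $i,j$. I would look at the system from the viewpoint in which $X_i$ is the target while $X_j$ and the joint variable $W$ of all remaining variables play the role of the two sources, and symmetrically from the viewpoint in which $X_j$ is the target and $X_i, W$ are the sources. When $n=3$ the variable $W$ is literally the unique third variable $X_k$, so no aggregation is involved and this is exactly the setting of Axiom~\ref{axiom:Quantitative Computation}.

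Then I would apply the first line of Axiom~\ref{axiom:Quantitative Computation} in each of the two views, obtaining $Un(X_i:X_j) = I(X_j:X_i) - Red(X_i:X_j,W)$ and $Un(X_j:X_i) = I(X_i:X_j) - Red(X_j:X_i,W)$. Now mutual information is symmetric, so $I(X_j:X_i) = I(X_i:X_j)$, and by Theorem~\ref{Theorem:Symmetry of Redundant Information} the redundant information does not depend on which of the three players $X_i$, $X_j$, $W$ is taken as target, so $Red(X_i:X_j,W) = Red(X_j:X_i,W)$, both equal to the single system redundancy. Subtracting the two displayed equalities then gives $Un(X_i:X_j) - Un(X_j:X_i) = 0$, which is precisely the assertion; the common value may therefore be written without reference to any target variable. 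For $n=3$ this is essentially the whole proof.

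I expect the only genuine obstacle to be the passage from the literal three-variable case to general $n$: Axiom~\ref{axiom:Quantitative Computation} is phrased for three variables, so one must justify the ``pair versus rest'' reduction, i.e.\ that the unique information $Un(X_i:X_j)$ computed in the full system agrees with the corresponding atom of the reduced three-player system $\{X_i,X_j,W\}$, equivalently that collapsing the complement of $\{X_i,X_j\}$ into the joint variable $W$ does not disturb the $X_i$--$X_j$ unique atom. I would settle this by arguing directly from Definition~\ref{definition:Set Intersection of Information}: the candidate variables $Q$ entering the supremum that defines the relevant redundancy are constrained by $Q\sqsubset X_j$ and by lying below each remaining source, and one checks that the feasible set, hence the supremum, is unaffected by the grouping used here; alternatively, one may simply state and prove the theorem for $n=3$ (the case the Venn diagrams depict) and extend it atom by atom. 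Modulo that reduction, the result is a short consequence of Axiom~\ref{axiom:Quantitative Computation}, the symmetry of $I(\cdot:\cdot)$, and Theorem~\ref{Theorem:Symmetry of Redundant Information}.
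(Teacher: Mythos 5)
Your proposal is correct and follows essentially the same route as the paper: apply the identity $Un(X_i:X_j) = I(X_i:X_j) - Red(X_i:X_j,\cdot)$ from Axiom~\ref{axiom:Quantitative Computation} in both directions, then invoke the symmetry of mutual information and Theorem~\ref{Theorem:Symmetry of Redundant Information}. The paper in fact only carries out the argument for the three-variable case (a restriction it acknowledges in its limitations section), so your additional remark about justifying the ``pair versus rest'' reduction for general $n$ is a careful observation rather than a divergence in method.
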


\begin{proof}
According to Axiom \ref{axiom:Quantitative Computation}, unique information is a part of the information provided by the source variable to the target variable, that is, mutual information minus redundant information. In a three-variable system $\{X_{1},X_{2},X_{3}\}$, by Equation \ref{align:RedUn} we have: 
\begin{align}
Un(X_{i}:X_{j}) = I (X_{i}; X_{j}) - Red(X_{i}:X_{j},X_{k}), \forall i \ne j \in \{1,2,3\}
\end{align}
Since $I(X_{i}:X_{j}) = I(X_{j}:X_{i})$ according to the symmetry of Shannon's formula \cite{shannon2001mathematical}, and $Red(X_{i}:X_{j},X_{k}) = Red(X_{j}:X_{i},X_{k}) = Red(X_{i},X_{j},X_{k})$ according to Theorem \ref{Theorem:Symmetry of Redundant Information}, we have:
\begin{align}
Un(X_{i}:X_{j}) &= I (X_{i}; X_{j}) - Red(X_{i}:X_{j},X_{k}) \nonumber\\
&= I (X_{j}; X_{i}) - Red(X_{j}:X_{i},X_{k})\nonumber\\
&= Un(X_{j}:X_{i})\nonumber
\end{align}
For general multivariate systems $X_1, \dots, X_n$, we can prove the symmetry of unique information between any two variables $X_i$ and $X_j$ by combining other variables $X_1, \cdots, X_n \setminus X_i,X_j $ into one variable $X_k$. Therefore, we proved the theorem, and we can represent this information atom as $Un(X_{i},X_{j})$, or $Un_{i,j}$.
\end{proof}

\begin{theorem} [Symmetry of Synergistic Information]
\label{Theorem:Symmetry of Synergistic Information}
Let $X_{1},\cdots , X_{n}$ be the variables in a system. In SID, the synergistic information of any group of variables is equal, regardless of which is chosen as the target variable. Formally, we write $Syn(X_{i}:\{X_{1},\cdots , X_{n}\}\setminus X_{i})= Syn(X_{j}:\{X_{1},\cdots , X_{n}\}\setminus X_{j}), \forall i,j\in \{1\cdots n\}$.  
\end{theorem}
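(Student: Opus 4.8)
The plan is to reduce the symmetry of synergy to two facts we already have: the symmetry of the redundant atom (Theorem \ref{Theorem:Symmetry of Redundant Information}) and the manifest permutation-invariance of Shannon's joint entropies. The idea is to get a closed form for the synergistic atom in which the only possibly target-dependent ingredient is the redundancy. In the three-variable system $\{X_1,X_2,X_3\}$, pick any variable, say $X_3$, as target. From Axiom \ref{axiom:Quantitative Computation} we have the one-source relations $I(X_3:X_1)=Un(X_3:X_1)+Red(X_3:X_1,X_2)$ and $I(X_3:X_2)=Un(X_3:X_2)+Red(X_3:X_1,X_2)$, together with the completeness relation $Syn(X_3:X_1,X_2)+Red(X_3:X_1,X_2)+Un(X_3:X_1)+Un(X_3:X_2)=I(X_3:X_1,X_2)$. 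Eliminating the two unique terms gives
\[
Syn(X_3:X_1,X_2)=Red(X_3:X_1,X_2)-\bigl[\,I(X_3:X_1)+I(X_3:X_2)-I(X_3:X_1,X_2)\,\bigr].
\]

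Next I would expand the bracketed quantity — the co-information of $X_1,X_2,X_3$ — using $I(A:B)=H(A)+H(B)-H(A,B)$, whereupon it collapses to $H(X_1)+H(X_2)+H(X_3)-H(X_1,X_2)-H(X_1,X_3)-H(X_2,X_3)+H(X_1,X_2,X_3)$, which is visibly invariant under every permutation of $\{X_1,X_2,X_3\}$. Since $Red(X_3:X_1,X_2)=Red_{123}$ is permutation-invariant by Theorem \ref{Theorem:Symmetry of Redundant Information}, the entire right-hand side is independent of which variable was singled out as target. Hence $Syn(X_3:X_1,X_2)=Syn(X_1:X_2,X_3)=Syn(X_2:X_1,X_3)$, and we may denote the common value by $Syn(X_1,X_2,X_3)$ (equivalently $Syn_{123}$). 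This settles the three-variable case, which I expect is the heart of the statement.

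For a system of $n>3$ variables I would proceed in the same spirit as Step Two of the proof of Theorem \ref{Theorem:Symmetry of Redundant Information}: coarsen all variables other than the designated target $X_i$ into super-variables and argue that the top synergistic atom $Syn(X_i:\{X_1,\dots,X_n\}\setminus X_i)$ is preserved under this collapse (using that mutual information and the redundancy/unique atoms behave under a union variable exactly as established there), thereby reducing to a fixed three-variable computation of the form above; alternatively, one replaces the bracketed co-information by the $n$-variable interaction information $\sum_{\emptyset\neq T\subseteq\{1,\dots,n\}}(-1)^{|T|+1}H(X_T)$ and checks the analogous elimination identity on the PID lattice.

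The main obstacle is precisely this $n>3$ step. Axiom \ref{axiom:Quantitative Computation} is given only for three-variable systems, so the closed form for the top synergy atom in a larger lattice is not handed to us for free, and one must verify that collapsing subsets into super-variables isolates exactly the synergistic atom of interest rather than a mixture of higher-order atoms. Establishing the faithfulness of that coarsening is the delicate part; once it is in place, the three-variable identity plus the symmetry of $Red$ and of Shannon entropy closes the argument with only routine bookkeeping.
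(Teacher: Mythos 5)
Your proof is correct and reaches the same conclusion as the paper's, but by a slightly different decomposition. The paper starts from the relation $Syn(X_k:X_i,X_j)=H(X_k|X_j)-H(X_k|X_i,X_j)-Un(X_i,X_k)$ of Axiom \ref{axiom:Quantitative Computation}, invokes Theorem \ref{Theorem:Symmetry of Unique Information} to treat $Un(X_i,X_k)$ as target-independent, and then uses the chain rule to rewrite the conditional-entropy difference (i.e.\ the conditional mutual information $I(X_k:X_i|X_j)$) in a form manifestly symmetric under swapping $X_i$ and $X_k$. You instead eliminate both unique atoms via self-redundancy and arrive at the closed form $Syn=Red-\mathrm{CoI}$, where the co-information expands into an alternating sum of joint entropies that is visibly invariant under all permutations; symmetry of $Syn$ then follows directly from Theorem \ref{Theorem:Symmetry of Redundant Information}. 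Your route has two small advantages: it does not need Theorem \ref{Theorem:Symmetry of Unique Information} as an intermediate step (only Theorem \ref{Theorem:Symmetry of Redundant Information}), and it delivers in passing exactly the identity the paper later states separately as the Intersection Information Decomposition corollary, $\mathrm{CoI}=Red-Syn$. The paper's route is marginally more economical in that it only manipulates one conditional mutual information rather than the full co-information. Your caution about the $n>3$ case is well placed but does not put you behind the paper: the paper's own proof is likewise confined to three variables, and the authors explicitly defer the general multivariate extension to future work, so no gap arises relative to what is actually established there.
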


\begin{proof}
According to Axiom \ref{axiom:Quantitative Computation}, Theorem \ref{Theorem:Symmetry of Unique Information}, and the chain rule of Shannon formula, for a three-variable system with ${X_{i}, X_{j}, X_{k}}$:
\begin{align}
Syn(X_{k}:X_{i},X_{j}) &= H(X_{k}|X_{j}) - H(X_{k}|X_{i},X_{j}) - Un(X_{i},X_{k}) \nonumber \\
&= (H(X_{j},X_{k}) - H(X_{j})) - (H(X_{i},X_{j},X_{k}) - H(X_{i},X_{j})) - Un(X_{i},X_{k}) \nonumber \\
&= H(X_{j},X_{k}) + H(X_{i},X_{j}) - H(X_{j}) - H(X_{i},X_{j},X_{k}) - Un(X_{i},X_{k}) \nonumber \\
&= (H(X_{i},X_{j}) - H(X_{j})) - (H(X_{i},X_{j},X_{k}) - H(X_{j},X_{k})) - Un(X_{i},X_{k}) \nonumber \\
&= H(X_{i}|X_{j}) - H(X_{i}|X_{j},X_{k}) - Un(X_{i},X_{k}) \nonumber \\
&= Syn(X_{i}:X_{j},X_{k}) \nonumber
\end{align}
Therefore, we demonstrate that $X_i$ and $X_k$ are interchangeable, and since $X_i$ and $X_j$ are interchangeable as source variables, we proved that all variables are interchangeable. For general multivariate systems $X_1,\cdots,X_n$, we can prove the symmetry of synergistic information between any two variables $X_i$ and $X_k$ by combining other variables into one variable $X_j$. Therefore, we proved Theorem \ref{Theorem:Symmetry of Synergistic Information} and we can write synergistic information in the form of $Syn(X_{1},\cdots, X_{n})$ or $Syn_{1\cdots n}$.
\end{proof}

Based on the Theorem \ref{Theorem:Symmetry of Redundant Information} \ref{Theorem:Symmetry of Unique Information} \ref{Theorem:Symmetry of Synergistic Information} (the symmetry of information atoms), the SID framework can be merged into the formal version in Figure \ref{fig:Venn_2.0}. In the formal version of SID, the concept of target variable is canceled, and all variables are equally decomposed according to their relationship with other variables. Specifically, redundant information and unique information are merged. redundant information (atoms) in any group of variables and unique information (atoms) between any two variables appear only shown one time in the Venn diagram, while synergistic information (atoms) appear in each participating variable with the same value, and each variable contains one external information (atom). So far, we can give the formal definition of SID: 

\begin{figure}[htbp]
\centering
\fbox{\includegraphics[width=.5\linewidth]{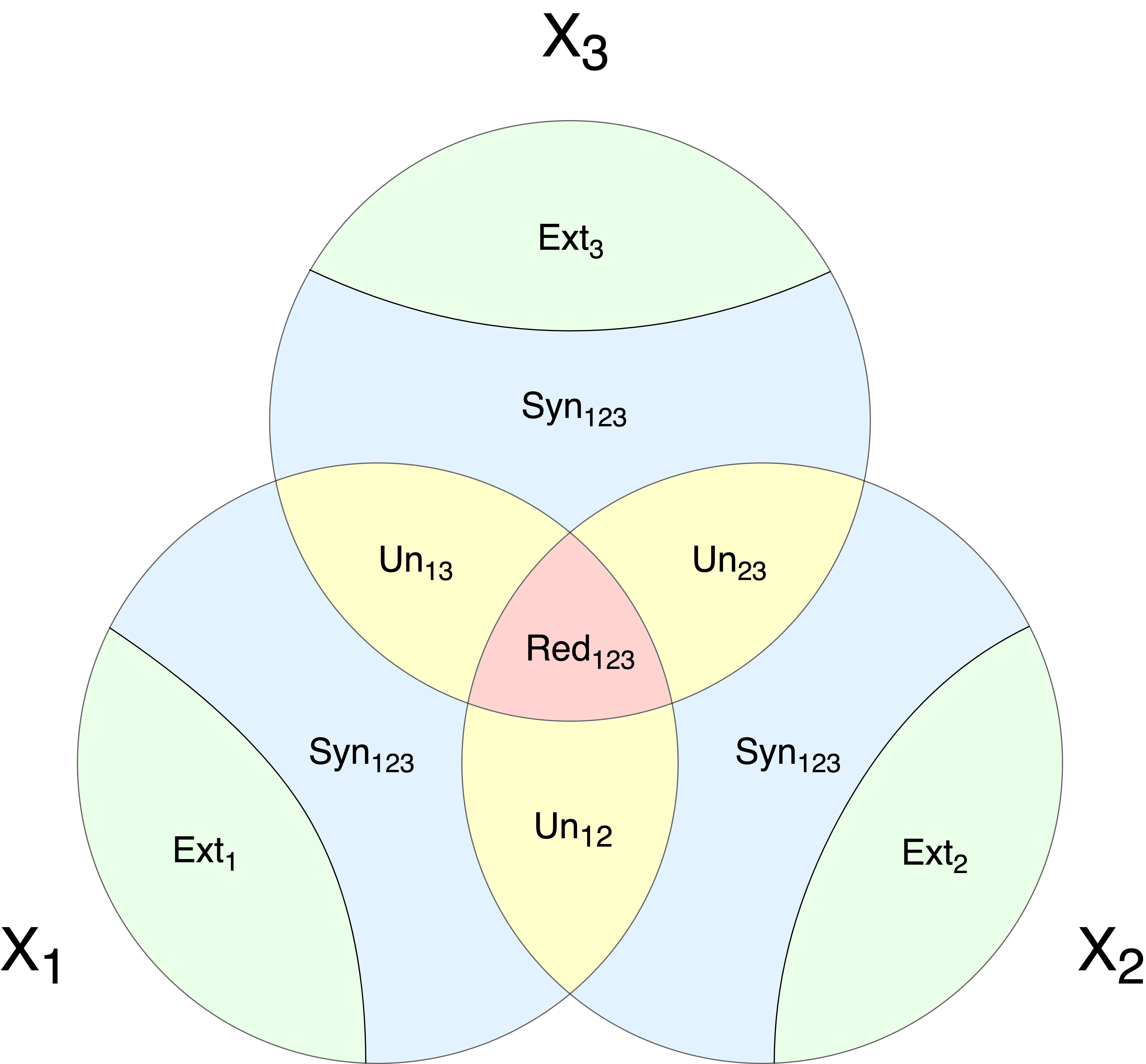}}
\caption{Venn diagram of SID's Formal Version.}
\label{fig:Venn_2.0}
\end{figure}

\begin{definition} [System Information Decomposition Framework]
SID is a conceptual system decomposition framework based on information entropy, that can divide the whole information entropy of a multivariate system into non-overlapping information atoms according to the relationship among variables. In this framework, redundant information represents the common or overlapping information of all the variables; unique information represents information that is only owned by two variables but not by others; and synergistic information represents the information that can be known from any variable only when the other variables are observed simultaneously. 
\end{definition} 

In the SID framework, the Venn diagram unifies the Shannon's framework\ref{sec:Information Theory Framework} and PID framework \ref{sec:Partial Information Decomposition Framework}. For a intuitive presentation, we only give the Venn diagram of three-variable system ($\{X_1, X_2, X_3\}$) in this paper. Since Venn diagrams become difficult to understand as the number of variables increases, for the visualization of general systems' SID, we will explore other visualization tools for SID in the Discussion.

\subsection{SID and Information Measure}
\label{sec:SID and Existing Information Measure}

In addition to Axiom \ref{axiom:Quantitative Computation} and Definition \ref{definition:External Information} for the relationship between SID and mutual information, conditional entropy and joint conditional entropy, there are still some important information measures that deserve our attention.

\begin{Corollary} [Joint Entropy Decomposition]
\label{Corollary:Joint Entropy Decomposition}
    For any subsystem with 3 variables, $H (X_{1}, X_{2}, X_{3}) = Ext(X_{1}) + Ext(X_{2}) + Ext(X_{3}) + Un(X_{1},X_{2}) + Un(X_{1},X_{3}) + Un(X_{2},X_{3}) + 2 * Syn(X_{1}, X_{2}, X_{3}) +  Red (X_{1}, X_{2}, X_{3})$.
\end{Corollary}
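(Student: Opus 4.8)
The plan is to start from the formal SID Venn diagram (Figure \ref{fig:Venn_2.0}) and verify the decomposition of $H(X_1,X_2,X_3)$ by bookkeeping which atoms each variable's entropy contains, then checking that the sum accounts for every region of the joint-entropy Venn diagram exactly once. Concretely, I would first write down the atomic decomposition of a single variable's entropy. From Definition \ref{definition:External Information} and Axiom \ref{axiom:Quantitative Computation}, together with the symmetry Theorems \ref{Theorem:Symmetry of Redundant Information}, \ref{Theorem:Symmetry of Unique Information}, and \ref{Theorem:Symmetry of Synergistic Information}, we have $H(X_1) = Ext(X_1) + I(X_1 : X_2, X_3)$ and $I(X_1 : X_2, X_3) = Red(X_1,X_2,X_3) + Un(X_1,X_2) + Un(X_1,X_3) + Syn(X_1,X_2,X_3)$, and symmetrically for $X_2$ and $X_3$.

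Next I would combine these. Summing $H(X_1)+H(X_2)+H(X_3)$ overcounts: each $Red$ appears three times, each $Un(X_i,X_j)$ appears twice (once in $H(X_i)$ and once in $H(X_j)$), each $Syn$ appears three times, and each $Ext(X_i)$ appears once. Meanwhile $H(X_1,X_2,X_3)$ should, per the Venn diagram, contain $Red$ once, each $Un(X_i,X_j)$ once, $Syn$ twice, and each $Ext(X_i)$ once. So the claimed identity is equivalent to
\begin{align}
H(X_1)+H(X_2)+H(X_3) - H(X_1,X_2,X_3) &= 2\,Red(X_1,X_2,X_3) + Un(X_1,X_2) \nonumber \\
&\quad + Un(X_1,X_3) + Un(X_2,X_3) + Syn(X_1,X_2,X_3). \nonumber
\end{align}
The left side is the classical total correlation (multi-information) of the three variables, so the cleanest route is to prove that this Shannon quantity equals the right-hand combination of atoms. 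I would do this by writing the total correlation as $I(X_1:X_2) + I(X_1 X_2 : X_3)$ via the chain rule, then expanding each mutual information through Axiom \ref{axiom:Quantitative Computation} and the symmetry theorems: $I(X_1:X_2) = Red + Un(X_1,X_2)$ (using self-redundancy / the two-source decomposition), and $I(X_1,X_2 : X_3) = Red + Un(X_1,X_3) + Un(X_2,X_3) + Syn$ — here I would need the two-way pooling identity $I(X_1 X_2 : X_3) = Un(X_3:X_1) + Un(X_3:X_2) + Red(X_3:X_1,X_2) + Syn(X_3:X_1,X_2)$ from Axiom \ref{axiom:Quantitative Computation}, then rewrite every term symmetrically. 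Adding the two gives exactly the right-hand side, completing the argument; I would also note the alternative of simply summing the three single-variable decompositions and substituting the total-correlation identity, which is essentially the same computation organized differently.

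The main obstacle is not any deep step but making the overcounting bookkeeping airtight: one has to be careful that the $Syn$ atom genuinely sits inside all three variables' entropies (so it is counted three times in $\sum H(X_i)$ but only twice in $H(X_1,X_2,X_3)$, matching its depicted multiplicity of $2$ in Figure \ref{fig:Venn_2.0}), and that no region of the joint entropy is silently dropped or double-assigned. A secondary subtlety is that the whole derivation leans on Axiom \ref{axiom:Quantitative Computation} holding consistently for all three choices of target within the three-variable system; since the symmetry theorems guarantee the atoms are well-defined independent of target, I would explicitly invoke them when equating, e.g., $Un(X_3:X_1)$ with $Un(X_1,X_3)$ and $Red(X_3:X_1,X_2)$ with $Red(X_1,X_2,X_3)$, so that the final expression is stated purely in the target-free SID notation.
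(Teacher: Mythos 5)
Your proof is correct and follows essentially the same route the paper intends: the paper gives no detailed argument, only the remark that the corollary ``can be easily proved by Axiom~\ref{axiom:Quantitative Computation},'' and your derivation is precisely that computation, expanding each $H(X_i)$ via $Ext(X_i)+I(X_i:X_j,X_k)$ and Axiom~\ref{axiom:Quantitative Computation}, then invoking the symmetry theorems to write the atoms target-free. Your intermediate identity for the total correlation also reproduces the paper's own Corollary~\ref{Corollary:Total Correlation Decomposition}, which confirms the bookkeeping.
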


Based on Corollary \ref{Corollary:Joint Entropy Decomposition}, which can be easily proved by Axiom \ref{axiom:Quantitative Computation}, we can have a deeper understanding of information atoms, that is, any information atom can be understood as some kind of information stored by $m$ variables, and at least $n$ variables need to be known to obtain the information ($m>n, m, n \in \mathbb{Z}$). Specifically, the external information of the system is owned by the variable independently, so $m=1$ and $n=1$; redundant information is owned by all variables, so $m = number\>of\>variables$ and $n=1$; unique information is owned by two variables, Therefore $m=2$ and $n=1$; synergistic information is shared by all variables, so $m = number\>of\>variables$ and $n = number\>of\>variables - 1$. Therefore, the joint entropy decomposition is the sum of each information atom multiplied by its $m-n$ quantity. This perspective will deepen our understanding of the essence of information atoms and facilitate our exploration of the joint entropy decomposition of systems with more than three variables. Besides, this phenomenon also reflects the differences between information measures and Venn diagrams. Considering that Venn diagrams cannot fully reflect the nature of information decomposition, alternative visualization solution will be discussed in the discussion section.


\begin{Corollary} [Total Correlation Decomposition]
\label{Corollary:Total Correlation Decomposition}
    For any subsystem with 3 variables: 
    \begin{align} TC (X_{1}, X_{2}, X_{3}) =  Un(X_{1},X_{2}) + Un(X_{1},X_{3}) + Un(X_{2},X_{3}) + Syn(X_{1}, X_{2}, X_{3}) +  2 * Red (X_{1}, X_{2}, X_{3})
    \end{align}
\end{Corollary}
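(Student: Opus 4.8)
The plan is to reduce the claim to the already-established Joint Entropy Decomposition (Corollary \ref{Corollary:Joint Entropy Decomposition}) together with a decomposition of each \emph{marginal} entropy into SID atoms, and then to do careful bookkeeping. First I would recall the definition of total correlation, $TC(X_1,X_2,X_3) = H(X_1) + H(X_2) + H(X_3) - H(X_1,X_2,X_3)$, so that the problem becomes an exercise in expressing every entropy on the right-hand side in terms of the information atoms and counting how many times each atom occurs.

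Next I would decompose a single marginal entropy. Taking $X_i$ as the target variable and $X_j, X_k$ as sources, Shannon's identity gives $H(X_i) = I(X_i : X_j, X_k) + H(X_i \mid X_j, X_k)$. The conditional term is exactly $Ext(X_i)$ by Definition \ref{definition:External Information}, and the joint mutual information term splits, by the last line of Axiom \ref{axiom:Quantitative Computation}, as $Red(X_i : X_j, X_k) + Un(X_i : X_j) + Un(X_i : X_k) + Syn(X_i : X_j, X_k)$. Invoking the symmetry Theorems \ref{Theorem:Symmetry of Redundant Information}, \ref{Theorem:Symmetry of Unique Information} and \ref{Theorem:Symmetry of Synergistic Information}, these atoms are target-independent, so $H(X_i) = Ext(X_i) + Un(X_i,X_j) + Un(X_i,X_k) + Syn(X_1,X_2,X_3) + Red(X_1,X_2,X_3)$.

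Then I would sum this identity over $i = 1,2,3$. Each unique atom is counted once for each of the two variables that own it, hence twice; the synergistic atom appears once per variable, hence three times; the redundant atom likewise appears three times; and the three external atoms appear once each. This yields $H(X_1)+H(X_2)+H(X_3) = \sum_i Ext(X_i) + 2\big(Un(X_1,X_2)+Un(X_1,X_3)+Un(X_2,X_3)\big) + 3\,Syn(X_1,X_2,X_3) + 3\,Red(X_1,X_2,X_3)$. Substituting Corollary \ref{Corollary:Joint Entropy Decomposition} for $H(X_1,X_2,X_3)$ and subtracting, the three external terms cancel, one copy of each unique atom survives, $3\,Syn - 2\,Syn = Syn$ remains, and $3\,Red - Red = 2\,Red$ remains, which is exactly the stated formula.

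The computation is routine; the only place demanding care is the multiplicity bookkeeping — in particular the asymmetry between how synergy enters a marginal entropy (coefficient $1$, once per variable) versus the joint entropy (coefficient $2$, as recorded in Corollary \ref{Corollary:Joint Entropy Decomposition} and explained by the $m-n$ discussion following it). Getting those counts right is what makes the synergy coefficient collapse to $1$ and the redundancy coefficient rise to $2$ in the final expression, so I would double-check them against the Venn diagram of Figure \ref{fig:Venn_2.0} before concluding.
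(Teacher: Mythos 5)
Your proposal is correct and follows the route the paper implicitly intends: the paper states this corollary without a written proof, relying on the same bookkeeping you perform (decompose each marginal entropy via Axiom \ref{axiom:Quantitative Computation} and Definition \ref{definition:External Information}, sum, and subtract Corollary \ref{Corollary:Joint Entropy Decomposition}), and your multiplicity counts ($2$ per unique atom, $3$ each for synergy and redundancy in the sum of marginals) are right. Nothing further is needed.
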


\begin{Corollary} [Intersection Information Decomposition]
For any system with 3 variables, its Intersection Information:
\begin{align} CoI (X_{1}, X_{2}, X_{3}) = Red(X_{1}, X_{2}, X_{3}) - Syn(X_{1}, X_{2}, X_{3})\end{align}
\end{Corollary}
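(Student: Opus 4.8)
The plan is to reduce the statement to a short computation with Axiom \ref{axiom:Quantitative Computation} after recalling the standard definition of the three-variable intersection (co-)information. First I would write $CoI(X_1,X_2,X_3)$ in one of its equivalent symmetric forms, for instance $CoI(X_1,X_2,X_3) = I(X_1:X_3) + I(X_2:X_3) - I(X_1,X_2:X_3)$, which is the inclusion--exclusion expression $CoI = \sum_i I(X_i) - \sum_{i<j} I(X_i:X_j) + I(X_1,X_2:X_3)$ regrouped around $X_3$ (equivalently $CoI = I(X_1:X_3) - I(X_1:X_3\mid X_2)$). Because co-information is invariant under permutation of its arguments, any one of the three variables may play this bookkeeping role.

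Next I would treat $X_3$ as the target variable and expand each mutual-information term using Axiom \ref{axiom:Quantitative Computation}: $I(X_1:X_3) = Un(X_3:X_1) + Red(X_3:X_1,X_2)$, $I(X_2:X_3) = Un(X_3:X_2) + Red(X_3:X_1,X_2)$, and $I(X_1,X_2:X_3) = Red(X_3:X_1,X_2) + Un(X_3:X_1) + Un(X_3:X_2) + Syn(X_3:X_1,X_2)$. Substituting and cancelling the two $Un$ terms together with one copy of $Red$ leaves $CoI(X_1,X_2,X_3) = Red(X_3:X_1,X_2) - Syn(X_3:X_1,X_2)$.

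Finally I would invoke Theorem \ref{Theorem:Symmetry of Redundant Information} and Theorem \ref{Theorem:Symmetry of Synergistic Information} to rewrite these target-specific atoms as the symmetric system atoms $Red(X_1,X_2,X_3)$ and $Syn(X_1,X_2,X_3)$, which yields the claimed identity; the symmetry theorems also confirm that the earlier choice of $X_3$ as bookkeeping target was immaterial, consistent with the permutation-invariance of $CoI$.

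I do not expect a serious obstacle here: the two points requiring care are (i) fixing the sign convention for $CoI$ --- with $CoI = I(X_1:X_2) - I(X_1:X_2\mid X_3)$ the redundancy-positive / synergy-negative form comes out, while the opposite convention flips the statement --- and (ii) checking that the decomposition of $I(X_1,X_2:X_3)$ used is exactly the one fixed by Axiom \ref{axiom:Quantitative Computation}, so that the cancellation is exact and no unresolved atom remains. Neither needs anything beyond the Shannon identities already used in the paper.
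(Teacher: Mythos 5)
Your derivation is correct and follows essentially the route the paper intends: the paper leaves this corollary as an immediate consequence of Axiom \ref{axiom:Quantitative Computation} together with Theorems \ref{Theorem:Symmetry of Redundant Information} and \ref{Theorem:Symmetry of Synergistic Information}, merely quoting the entropy-sum form of $CoI$ to note its permutation symmetry, and your expansion of $I(X_1:X_3)+I(X_2:X_3)-I(X_1,X_2:X_3)$ into atoms is exactly that argument made explicit. Your sign-convention check is also consistent with the paper's formula, which corresponds to $CoI=I(X_1:X_2)-I(X_1:X_2\mid X_3)$ and hence to the redundancy-minus-synergy form stated.
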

According to the calculation of $CoI (X,Y, Z) = H (X_{1}, X_{2}, X_{3}) + H (X_{1}) + H (X_{2}) + H (X_{3}) - H (X_{1}, X_{2}) - H (X_{1}, X_{3}) - H (X_{2}, X_{3})$, $Col$ is symmetry and unique for a system, which also verifies if one of Synergistic or Redundant information is symmetric, then so is the other.

\section{Case Studies}
\label{sec:Case Studies}

In this section, through a series of case analyses, we elucidate the unique properties of the SID framework and its capacity to uncover higher-order relationships that surpass the capabilities of current information and probability measures.

Without loss of generality, we can construct a case that includes both macro and micro perspectives, which can not only analyze the properties of SID at the macro level but also obtain "ground truth" through known micro properties. First, we construct six uniformly distributed Boolean variables $a,b,c,d,e,f$, ensuring that these variables are independent. We then create new variables by performing XOR operations on the existing variables: let $g = c\oplus e$, $h = d\oplus f$, $i = c\oplus f$, and $j = d\oplus e$, where $\oplus$ represents XOR.

Next, we construct new macro variables by combining these micro variables: let $X_{1} = (a,b,c,d)$, $X_{2} = (a,b,e,f)$, $X_{3} = (c,d,e,f)$, $X_{4} = (a,c,e,h)$, $X_{5} = (a,b,g,h)$, $X_{6} = (a,b,i,j)$. The combination method involves simple splicing; e.g., when $a=1$, $b=0$, $c=1$, $d=1$, $X_{1}$ is equal to $1011$. Appendix \ref{Appendix} provides a concrete example that matches this design. As the micro-level variables are independent of each other, this combination ensures that the properties of the macro variables are a combination of the properties of the micro variables.

Then, we fix $X_1$ and $X_2$ as constants and form different three-variable systems (Cases 1-4) by adding $X_3$, $X_4$, $X_5$, and $X_6$ respectively, as shown in Table \ref{tab:case detail}. After knowing the microscopic dynamics of these cases, we can more intuitively analyze their characteristics under the SID framework.

\begin{table}[h]
\centering
\begin{tabular}{|c|c|c|c|}
\hline
Case & Variables & Micro-component & Micro-relationship\\
\hline
 1  & $X_1$, $X_2$, $X_3$ & $abcd$, $abef$, $cdef$ & $abcdef$ are independent \\
\hline
 2  & $X_1$, $X_2$, $X_4$ & $abcd$, $abef$, $aceh$ & $abcdef$ are independent, $h = d\oplus f$ \\
\hline
 3  & $X_1$, $X_2$, $X_5$ & $abcd$, $abef$, $abgh$ & $abcdef$ are independent, $g = c\oplus e$, $h = d\oplus f$\\
\hline
 4  & $X_1$, $X_2$, $X_6$ & $abcd$, $abef$, $abij$ & $abcdef$ are independent, $i = c\oplus f$, and $j = d\oplus e$\\
\hline
\end{tabular}
\caption{Cases Construction.}
\label{tab:case detail}
\end{table}

It is worth noting that these four cases yield identical results under existing non-information-decomposition measures. The system has 64 equally probable outcomes, each variable has 16 equally probable outcomes, the total information amount in the system is 6, the pairwise mutual information between variables is 2, and the conditional entropy is 2. Existing non-information-decomposition methods cannot identify the differences observed in these four examples, and information-decomposition-related methods cannot decompose the information entropy of all variables in the system and identify the symmetric relationship among variables.

However, the four systems exhibit three distinct internal characteristics under the SID framework. Since these examples comprise mutually independent micro variables, we can intuitively map the micro variables to the information atoms in each case. In Case 1, the micro variables $a,b$ provide 2-bit unique information between $X_1$ and $X_2$ ($c,d$ correspond to $X_1$ and $X_3$, $e,f$ correspond to $X_2$ and $X_3$). In Case 2, micro variable $a$ provides 1-bit redundant information, while $b$, $c$, and $e$ provide 1-bit unique information between $X_1$ and $X_2$, $X_1$ and $X_4$, $X_2$ and $X_4$ respectively. The XOR relationship between $d-f-h$ provides 1-bit synergistic information between variables. In Cases 3 and 4, micro variables $a$ and $b$ provide 2-bit redundant information, and XOR relationships of $c-e-g$, $d-f-h$, and $c-f-i$, $d-e-j$ provide 2-bit synergistic information for the two cases, respectively. Figure \ref{fig:SID Venn Diagrams} displays the SID Venn diagrams for Cases 1–4. 

\begin{figure}[htbp]
    \centering
    \begin{subfigure}{0.45\textwidth}
        \centering
        \includegraphics[width=\linewidth]{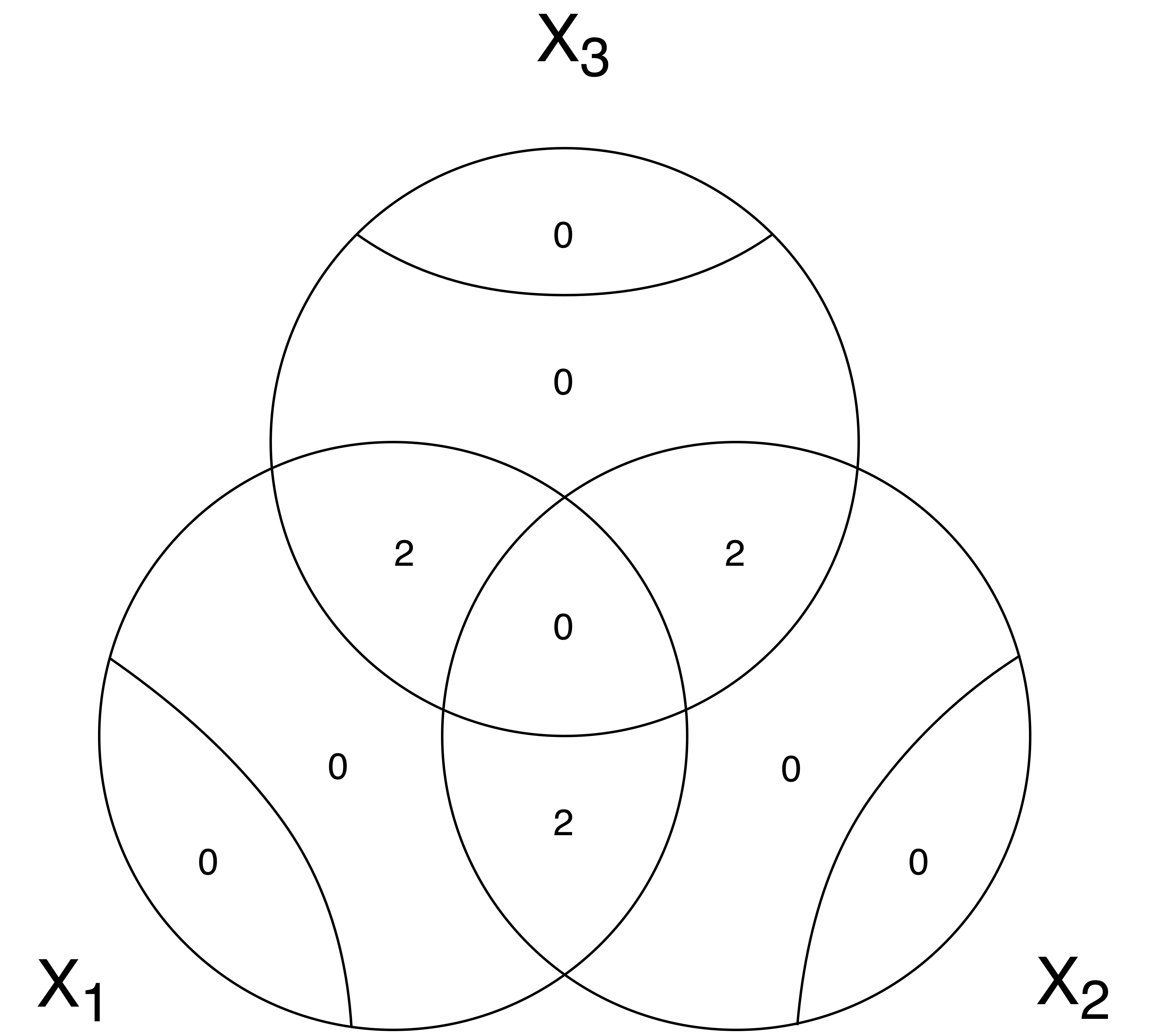}
        \caption{Case 1.}
    \end{subfigure}
    \hfill
    \begin{subfigure}{0.45\textwidth}
        \centering
        \includegraphics[width=\linewidth]{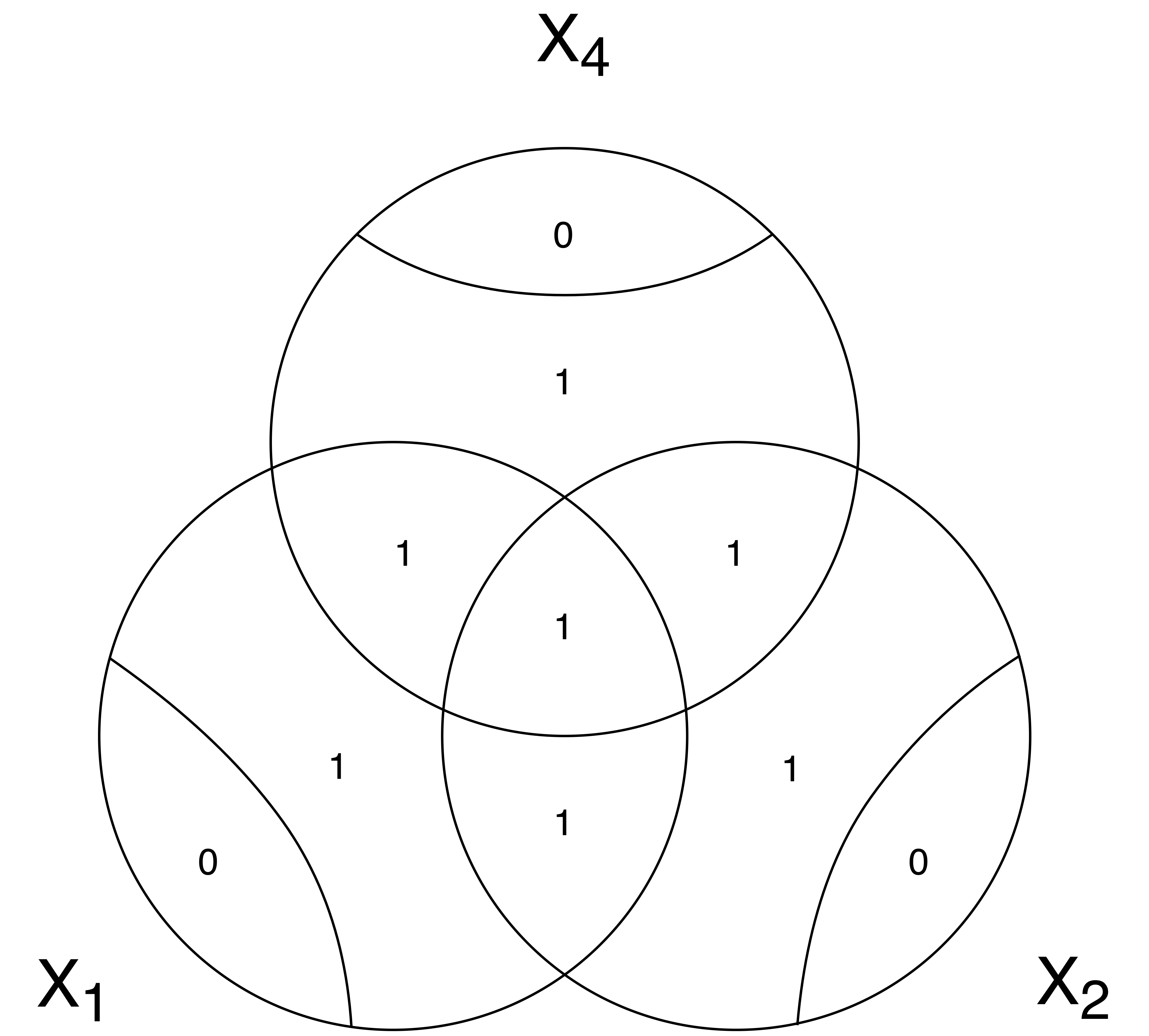}
        \caption{Case 2.}
    \end{subfigure}
    \newline
    \begin{subfigure}{0.45\textwidth}
        \centering
        \includegraphics[width=\linewidth]{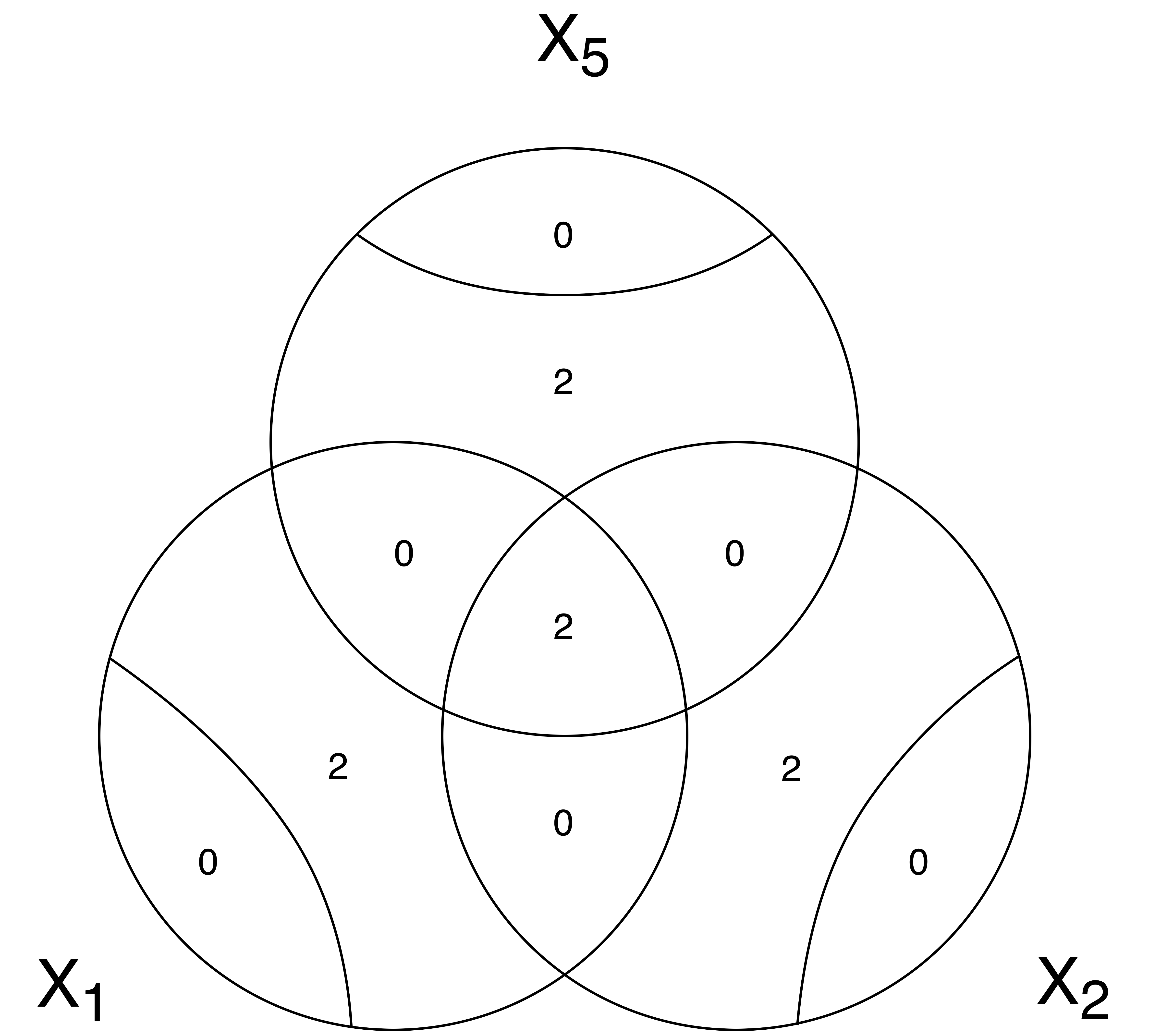}
        \caption{Case 3.}
    \end{subfigure}
    \hfill
    \begin{subfigure}{0.45\textwidth}
        \centering
        \includegraphics[width=\linewidth]{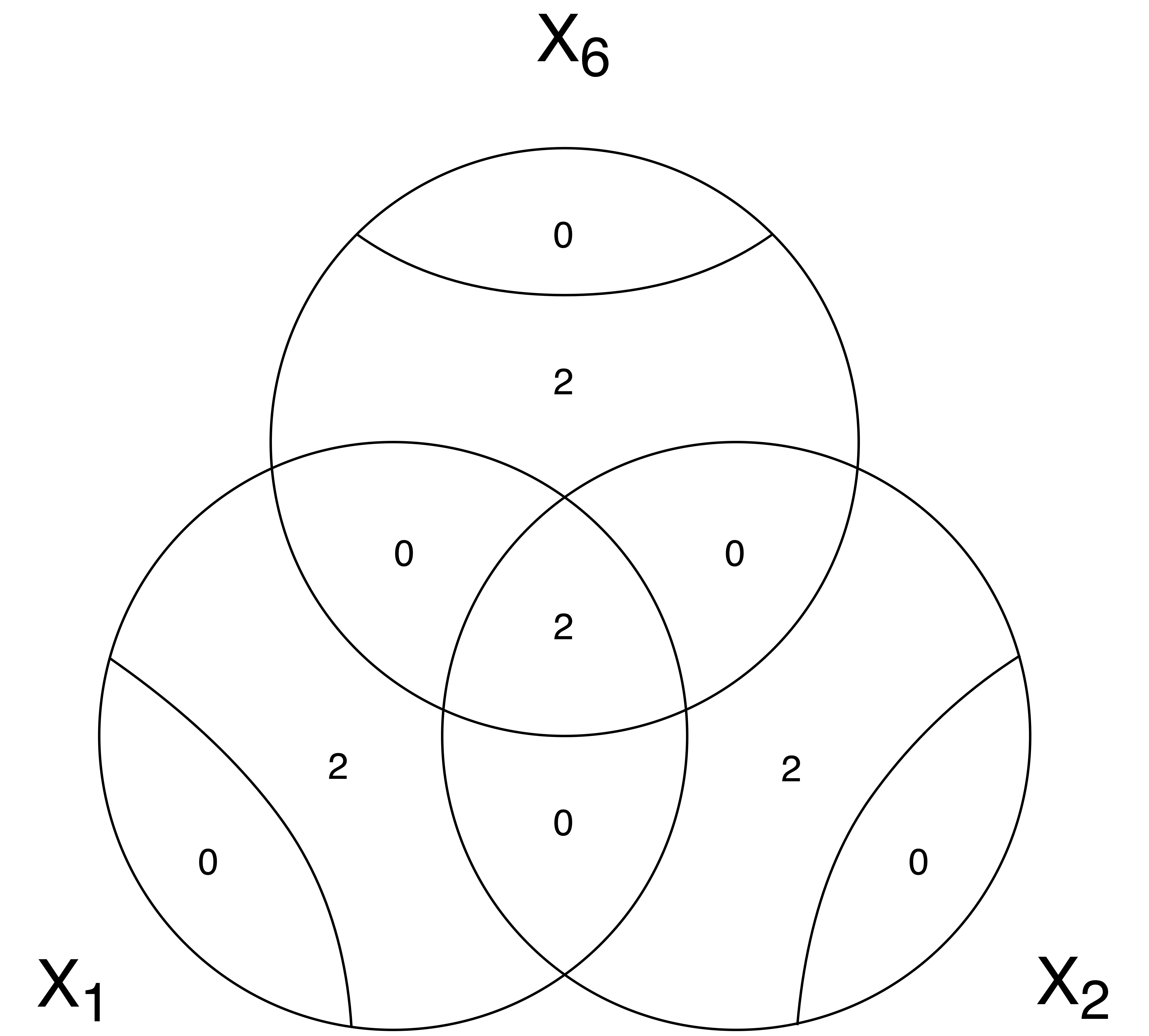}
        \caption{Case 4.}
    \end{subfigure}
    \caption{SID Venn Diagrams for Cases 1-4. In Case 1, $a,b$ provide 2-bit unique information between $X_1$ and $X_2$ ($c,d$ correspond to $X_1$ and $X_3$, $e,f$ correspond to $X_2$ and $X_3$). In Case 2, $a$ provides 1-bit redundant information, $b$, $c$, and $e$ provide 1-bit unique information between $X_1$ and $X_2$, $X_1$ and $X_4$, $X_2$ and $X_4$ respectively. The XOR relationship between $d-f-h$ provides 1-bit synergistic information. In Cases 3 and 4, $a, b$ provide 2-bit redundant information, XOR relationships of $c-e-g$, $d-f-h$, and $c-f-i$, $d-e-j$ provide 2-bit synergistic information for the two cases, respectively.}
    \label{fig:SID Venn Diagrams}
\end{figure}

\section{Calculation of SID}
\label{sec:Calculation of SID}
Although we have proposed the framework of SID and proved the symmetry of information atoms, the problem of exact computation has not been fully resolved. Therefore, in this section, we alternatively propose the desired properties that the calculation method of the SID framework should satisfy, and accept any method that can meet these properties. Additionally, we propose a direct methods for some special cases and two novel methods for more general cases and validate their accuracy and applicability through the examination of the cases \ref{sec:Case Studies}.

\subsection{Properties of Calculation Methods for SID}
\begin{Property}[Compatible with Shannon's formula]
\label{property:Shannon's formula}
The sum of certain information atoms should equal to the mutual information and conditional information. For a three-variable system, it is Axiom \ref{axiom:Quantitative Computation}.
\end{Property}
The information atoms can be regarded as a finer-grained division of Shannon’s information entropy calculation, so calculation methods such as information entropy, mutual information, and conditional entropy can accurately calculate the sum of some information atoms, which means that the SID’s calculation should conform to the Shannon formula. It is worth noting that when the specific PID calculation method calculates the value of one information atom, the rest of the information atoms will also get the results according to Axiom \ref{axiom:Quantitative Computation}. This means that the calculation method of SID only needs to focus on one information atom in the system. 

\begin{Property}[Computational Symmetry]
\label{property:Computational Symmetry}
The results of SID calculation should satisfy Theorems \ref{Theorem:Symmetry of Redundant Information}, \ref{Theorem:Symmetry of Unique Information}, and \ref{Theorem:Symmetry of Synergistic Information}. 
\end{Property}

For the same system, the order of variables in the calculation method will not affect the results. This ensures that the SID framework provides a consistent decomposition of information entropy, regardless of the order of variables. Specifically, for redundant information and synergistic information, changing the order of any variable in the calculation method will not change the result; for unique information, exchanging the positions of the two focused variables or changing the order of the remaining variables will not change the result.

\begin{Property}[Non-negativity of information atoms]
\label{property:Non-negativity of information atoms}
After applying SID, the value of any information atom is greater than or equal to zero. This non-negativity property holds because information measures, the degree of uncertainty are always non-negative as per the principles of information theory.
\end{Property}

Although the computational problem of information atoms has not been completely solved yet, just like finding the Lyapunov function, for a specific case, we can often use specific methods, analysis, and some intuition to get the result. For example, a direct and rigorous method is to use properties \ref{property:Shannon's formula} and \ref{property:Non-negativity of information atoms}. 

\begin{Proposition} [Direct Method]
\label{Proposition:Direct Method}
If certain mutual information or conditional entropy is zero, we can directly draw the conclusion that: (1) the redundant information and the corresponding unique information are zero if some mutual information is zero, or (2) the synergistic information and the corresponding unique information are zero if some conditional entropy is zero. Then, we can obtain the values of the remaining information atoms.
\end{Proposition}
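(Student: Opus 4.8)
The plan is to run everything off the linear identities of Axiom~\ref{axiom:Quantitative Computation}, the non-negativity Property~\ref{property:Non-negativity of information atoms}, and the elementary fact that $a+b=0$ with $a,b\ge 0$ forces $a=b=0$. First I would fix a three-variable system $\{X_i,X_j,X_k\}$ (the setting in which Axiom~\ref{axiom:Quantitative Computation} is stated) and invoke Theorems~\ref{Theorem:Symmetry of Redundant Information}--\ref{Theorem:Symmetry of Synergistic Information} at the outset, so that $Red_{ijk}$, the three $Un$ atoms, $Syn_{ijk}$ and the three $Ext$ atoms are all well-defined numbers independent of the target choice --- otherwise "the remaining information atoms" would not be meaningful. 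I would also note once and for all that the three $Ext$ atoms are never in doubt: each is read off directly from Definition~\ref{definition:External Information} as a Shannon conditional entropy, so the real content is to locate the other five atoms.

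\textbf{Part (1).} Suppose $I(X_i:X_j)=0$ for some pair (the other pairs are handled by relabeling). Applying the first identity of Axiom~\ref{axiom:Quantitative Computation} with $X_i$ as target and $X_j,X_k$ as sources gives $Un(X_i:X_j)+Red(X_i:X_j,X_k)=I(X_i:X_j)=0$; by the symmetry theorems these summands are $Un_{ij}$ and $Red_{ijk}$, both non-negative, so each vanishes. Substituting $Red_{ijk}=0$ back into the first identity for the remaining two pairs yields $Un_{ik}=I(X_i:X_k)$ and $Un_{jk}=I(X_j:X_k)$, and substituting these into the third identity of Axiom~\ref{axiom:Quantitative Computation} (with $X_k$ as target) yields $Syn_{ijk}=I(X_k:X_i,X_j)-Un_{ik}-Un_{jk}$. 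With the three $Ext$ atoms from Definition~\ref{definition:External Information}, every atom is then pinned down.

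\textbf{Part (2).} I would read the phrase "some conditional entropy is zero" as the vanishing of the conditional mutual-information term that literally appears in Axiom~\ref{axiom:Quantitative Computation}, namely $I(X_k:X_i\mid X_j)=H(X_k\mid X_j)-H(X_k\mid X_i,X_j)=0$. Then the second identity of Axiom~\ref{axiom:Quantitative Computation} with $X_k$ as target reads $Syn(X_k:X_i,X_j)+Un(X_k:X_i)=0$, so $Syn_{ijk}=0$ and $Un_{ik}=0$ by non-negativity. Feeding $Un_{ik}=0$ into the first identity for the pair $(X_i,X_k)$ gives $Red_{ijk}=I(X_i:X_k)$, and the first identity for the other two pairs then gives $Un_{jk}=I(X_j:X_k)-Red_{ijk}$ and $Un_{ij}=I(X_i:X_j)-Red_{ijk}$; the $Ext$ atoms are again immediate, so the whole decomposition is determined.

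The step I expect to be the main obstacle is not any single computation but making the logic airtight. I would first verify the small linear-algebra count that, without either zero hypothesis, Axiom~\ref{axiom:Quantitative Computation} leaves the five non-$Ext$ atoms underdetermined by exactly one degree of freedom (the well-known reason PID needs an extra quantitative rule), so that learning that one atom vanishes is precisely enough to close the system. The second delicate point is pinning down what "some conditional entropy is zero" means: I would state explicitly that it refers to the conditional mutual information in Axiom~\ref{axiom:Quantitative Computation}, and remark that a genuinely unconditional reading such as $H(X_k\mid X_j)=0$ is handled by the identical non-negativity argument and in fact forces additional atoms ($Ext(X_k)$, $Un_{ik}$, $Syn_{ijk}$) to vanish, so nothing is lost.
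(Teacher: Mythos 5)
Your argument is correct and follows exactly the route the paper intends: the paper offers no written proof of this Proposition, only the remark that the Direct Method rests on Property~\ref{property:Shannon's formula} (i.e.\ the linear identities of Axiom~\ref{axiom:Quantitative Computation}) and Property~\ref{property:Non-negativity of information atoms}, and your proposal is precisely that argument carried out in full, with the useful extra observations that the symmetry theorems are needed to make "the remaining atoms" well-defined and that the system has exactly one residual degree of freedom which the zero hypothesis closes. Your handling of the ambiguity in "some conditional entropy is zero" (conditional mutual information versus $H(X_k\mid X_j)=0$, with the latter forcing even more atoms to vanish) is a sound clarification that the paper itself does not supply.
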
 

For a more general scenario, we are going to give a calculation formula that can be applied to most situations and a neural network method that can give approximate values.

\subsection{A Calculation Formula}

Although we can calculate some cases through the Direct Method \ref{Proposition:Direct Method} or from the perspective of construction like previous analysis of case \ref{sec:Case Studies}, we need to find a general solution to make the SID framework applicable in a wider range of scenarios. After analyzing a large number of construction cases with ground truth, we reveal the correspondence between the values of information atoms and certain structures within the data, which we called it Synergistic Block and Unique Block.

\begin{definition}[Synergistic Block and Unique Block]

For a joint probability distribution table of three random variables $X_1,X_2,X_3(\mathcal{X}_1,\mathcal{X}_2,\mathcal{X}_3)$, given $X_{1}=x_{1}$, the possible values of $X_2$ and $X_3$ form a set $\mathcal{J}(x_1) = \{x_2:P(X_{2}=x_2|X_{1}=x_{1})\ne0\} $, $\mathcal{K}(x_1) = \{x_3:P(X_{3}=x_3|X_{1}=x_{1})\ne0\}$. 
The Synergistic Block of $X_2, X_3$ given $X_1 = x_1$ is the events such that $x_1,x_2,x_3 \in (\mathcal{X}_1/x_1) \times \mathcal{J}(x_1) \times \mathcal{K}(x_1)$. The Unique Block of $X_2$ given $X_{1}=x_{1}$ is the events such that $x_1,x_2,x_3 \in (\mathcal{X}_1/x_1) \times \mathcal{J}(x_1) \times (\mathcal{X}_3/\mathcal{K}(x_1))$ (the events $x_1,x_2,x_3 \in (\mathcal{X}_1/x_1) \times (\mathcal{X}_2 / \mathcal{J}(x_1)) \times \mathcal{K}(x_1)$ for $X_3$).
\end{definition}

Take Table \ref{Table:Case Table} as an example, we fixed the value of $X_{1}=0000$, and marked the values of all variables in this scenario in yellow. Then, we mark the values where $X_{2}$ to $X_{6}$ still take the same value when $X_{1} \ne 0000$ as pink. Taking $X_{1}$, $X_{2}$ and $X_{4}$ as examples, we marked the synergistic blocks in \textbf{bold}, and marked the unique blocks of $X_{2}$ and $X_{3}$ in \textit{italics}. Besides, although not as obvious as the previous two, redundant information also has corresponding redundant blocks. Based on this observation, we propose a conjecture for unique information and synergistic information.

\begin{Conjecture} [Information Atom Identification]
\label{Conjecture:Information Atoms Identification}
For a three-variable system, the synergistic information is greater than zero if and only if the synergistic block exists. 
The unique information between two variables is greater than zero if and only if fix any of them, the remaining variable have unique block. 
\end{Conjecture}

Based on the Conjecture \ref{Conjecture:Information Atoms Identification}, we propose a calculation formula for synergistic information (Unique information is also similar). Although this method is currently only applicable to three-variables cases constructed using uncorrelated features of microscopic variables like case \ref{sec:Case Studies}, it provides inspiration for the exploration of a general computational methods. The specific calculation method for synergistic information for a three-variable system involving $X_1$, $X_2$, and $X_3$ is as follows:

\begin{align}
&Syn(X_{1}, X_{2}, X_{3}) \nonumber \\
&= \sum_{x_{1},x_{2},x_{3}} P(x_{1},x_{2},x_{3}) *\log ( \frac{\sum_k P(X_{2} = x_{2}, X_{3} = k)}{P(X_{2} = x_{2},X_{1} = x_{1})} *
\frac{\sum_j P(X_{3} = x_{3}, X_{2} = j)}{P(X_{3} =  x_{3}, X_{1} = x_{1})} * \frac{P(X_{1} = x_{1})}{P(X_{2} = j, X_{3} = k)}) \nonumber \\
&- H(X_{1}|X_{2},X_{3}) \nonumber \\
& \forall (x_1,x_2,x_3), j \in \mathcal{J}(x_1) = \{x_2:P(X_{2}=x_2|X_{1}=x_{1})\ne0\}, k \in \mathcal{K}(x_1) = \{x_3:P(X_{3}=x_3|X_{1}=x_{1})\ne0\}
\end{align}

In the previous case \ref{sec:Case Studies}, since the data is relatively uniform, fixing any value of $X_{1}$ will have the same result, so we can quickly show the synergistic information of the four cases by fixing $X_{1}=0000$. In these cases, the $\log$ part of the formula can be intuitively understood as $\log$ (yellow + synergistic block / yellow), which is $\log (4/4) = 0$ in case 1; $\log (8/4) = 1$ in case 2; $\log(16/4)=2$ in cases 3 and 4. Unique information can also be calculated by a similar method like $\log $(yellow + unique block / yellow). 

\subsection{An Approximate Method by Neural Information Squeezer}
Another possible method is to use a generalized form of neural information squeezer (NIS, a machine learning framework by using invertible neural networks proposed in Ref \cite{zhang2023nis}) to numerically calculate the redundancy of the system, and then to derive other information atoms. 

\begin{figure}[htbp]
\centering
\includegraphics[width=1\linewidth]{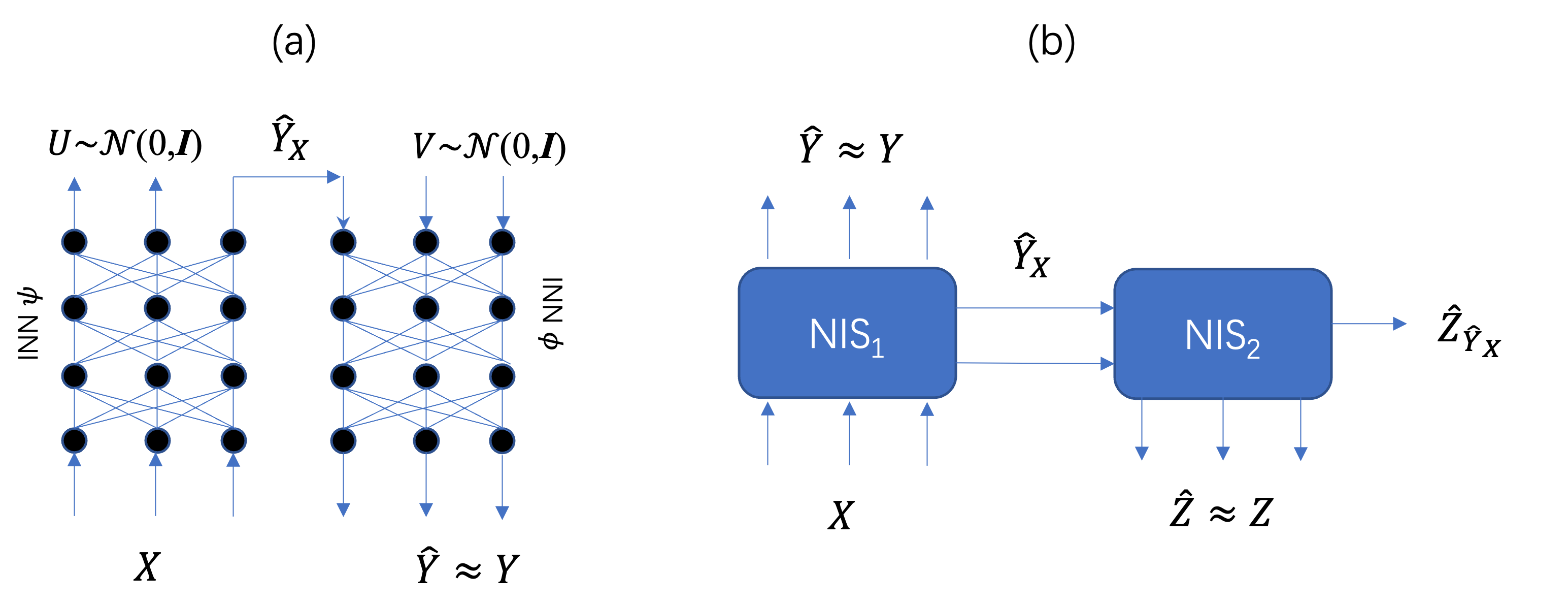}
\caption{A generalized form of the Neural Information Squeezer network (NIS, see \cite{zhang2023nis}) to calculate mutual information(a) and redundancy(b) of a trivariate system ($X,Y,Z$). In (a), there are two invertible neural networks ($\psi,\phi$) which can play the roles of encoder and decoder, respectively. The whole network accepts the input $X$ to predict $Y$, and the intermediate variable $\hat{Y}_X$, which is the minimum low-dimensional representation of $X$, can be used to calculate the mutual information between $X$ and $Y$. In (b), two NIS networks are stacked together. The first one is just the network in (a), and the intermediate variable $\hat{Y}_X$ is fed into the second NIS network to predict $Z$. Then the intermediate variable, $\hat{Z}_{\hat{Y}_X}$ which is the minimum low-dimensional representation of $\hat{Y}_X$, can be used to calculate the redundancy of the system $\{X,Y,Z\}$.}
\label{fig:nis}
\end{figure}

As shown in Figure \ref{fig:nis}(a), the NIS framework has two parts: an encoder and a decoder. The encoder can accept any real vector variable with dimension $p$, and it contains two operators: a bijector $\psi$ modeled by an invertible neural network (see details in \cite{zhang2023nis}) with dimension $p$ and a projector $\chi$ which can drop out the last $p-q$ dimensions from the variable $\psi_p(X)$ to form variable $U$. The remaining part ($\hat{Y}_X$) can be regarded as a low-dimensional representation of $X$ which will be used to construct the target $Y$ via another invertible neural network $\phi$ by mapping $[V, \hat{Y}_X]$ into $\hat{Y}$, where $V\sim \mathcal{N}(0,I)$ is a $p'-q$ dimensional random noise with Gaussian distribution, where $p'$ is the dimension of $Y$. Then, we need to train the whole framework to conform that (1) $\hat{Y}$ approximates the target variable $Y$, and (2) $U$ follows a $p-q$ dimensional standard normal distribution. It can be proven that the following proposition holds:

\begin{Proposition}\label{Proposition:NIS_MI}For any random variables $X$ with $p$ dimension and $Y$ with $p'$ dimension, and suppose $p$ and $p'$ are very large, then we can use the framework of Figure \ref{fig:nis}(a) to predict $Y$ by squeezing the information channel of $\hat{Y}_X$ as the minimum dimension $q^*$ but satisfying $\hat{Y}\approx Y$ and $U\sim\mathcal{N}(0,I)$. Further, if we suppose $H(X)>H(X|Y)>0$, then:
\begin{equation}
    \label{eqn.entropy_ins}
    H(\hat{Y}_X)\approx I(X;Y),
\end{equation}
and
\begin{equation}
    \label{eqn.hu_ins}
    H(U)\approx H(X|Y).
\end{equation}

\end{Proposition}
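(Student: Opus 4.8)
\textbf{Proof proposal for Proposition \ref{Proposition:NIS_MI}.}

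The plan is to analyze the constrained optimization that defines the squeezed channel and show that the two approximations fall out of an information-conservation argument applied to the invertible maps $\psi$ and $\phi$. First I would fix notation: write $\psi_p(X)=(\hat{Y}_X, U)$ where $\hat{Y}_X$ is the retained $q$-dimensional block and $U$ the dropped $(p-q)$-dimensional block. Since $\psi$ is a bijection (an invertible neural network), no information is lost, so $H(X)=H(\psi_p(X))=H(\hat{Y}_X,U)$; this is the workhorse identity. The decoder side is analogous: $\phi$ is a bijection mapping $(V,\hat{Y}_X)$ to $\hat{Y}$, with $V\sim\mathcal{N}(0,I)$ independent of $\hat{Y}_X$, so $H(\hat{Y})=H(V)+H(\hat{Y}_X)$ up to the mutual information $I(V;\hat{Y}_X)=0$. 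The training objective forces $\hat{Y}\approx Y$ and $U\sim\mathcal{N}(0,I)$ at the minimum dimension $q^\ast$; I would argue that ``minimum $q$ with $\hat{Y}\approx Y$'' is exactly the operational content of saying $\hat{Y}_X$ is a \emph{sufficient statistic} of $X$ for $Y$ of least dimension, i.e. a minimal sufficient statistic. For a minimal sufficient statistic $T=\hat{Y}_X$ one has $I(X;Y)=I(T;Y)=H(T)-H(T\mid Y)$, and the minimality/squeezing together with $\hat Y\approx Y$ should drive $H(T\mid Y)\to 0$ (all the residual uncertainty about $Y$ has been pushed into the noise channel $V$, not into $T$), giving $H(\hat{Y}_X)\approx I(X;Y)$, which is \eqref{eqn.entropy_ins}.

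For \eqref{eqn.hu_ins} I would combine the encoder identity with the sufficiency just established. From $H(X)=H(\hat{Y}_X,U)=H(\hat{Y}_X)+H(U\mid \hat{Y}_X)$ and the constraint that $U$ is forced to an independent standard normal, $U$ carries no information about $Y$ (it is the ``residual'' after squeezing out the part of $X$ relevant to $Y$), so $H(U\mid\hat{Y}_X)=H(U)$ and moreover $I(U;Y\mid \hat Y_X)=0$. Then $H(X\mid Y)=H(X)-I(X;Y)\approx H(\hat{Y}_X)+H(U)-H(\hat{Y}_X)=H(U)$, using \eqref{eqn.entropy_ins}. The hypotheses $H(X)>H(X\mid Y)>0$ guarantee the problem is non-degenerate: $H(X\mid Y)>0$ means the dropped block $U$ is genuinely nontrivial (so $q^\ast<p$ and the squeeze is meaningful), and $H(X)>H(X\mid Y)$ means $\hat Y_X$ is genuinely nontrivial ($q^\ast>0$), so neither $H(\hat Y_X)$ nor $H(U)$ collapses to zero and the ``$\approx$'' statements are about leading-order terms rather than vacuous equalities. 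The ``$p,p'$ very large'' assumption is what lets the invertible-network family be expressive enough that the infimum over architectures is attained (or approached) by a genuine minimal sufficient statistic; I would state this as the modeling assumption under which the argument runs rather than trying to prove universality.

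The main obstacle, and where I expect to have to be careful, is making the phrase ``$\hat Y_X$ is the minimum-dimensional representation with $\hat Y\approx Y$'' precise enough that ``$H(T\mid Y)\approx 0$'' actually follows. Dimension-minimality is not literally the same as information-minimality, and an invertible network plus a coordinate projection is a restricted class of encoders, so in principle the squeezed block could still contain a small amount of $Y$-irrelevant information that cannot be shed without losing a dimension. My plan is to handle this by invoking the continuous nature of the representation (generic real-vector variables, large ambient dimension) so that a $q$-dimensional sufficient statistic can be taken minimal sufficient, and by treating the two training constraints ($\hat Y\approx Y$, $U\sim\mathcal N(0,I)$) as \emph{exactly} satisfied in the idealized limit — i.e. proving the clean equalities $H(\hat Y_X)=I(X;Y)$ and $H(U)=H(X\mid Y)$ under the assumption of perfect training and exact minimality, and then noting that the approximation symbols in the statement absorb the training error. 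I would flag explicitly that the result is an idealized/limiting statement, consistent with how NIS is used in Ref.~\cite{zhang2023nis}, rather than a finite-sample guarantee.
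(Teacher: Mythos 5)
Your second identity is obtained exactly as in the paper: both you and the authors write $H(U)=H(\psi(X))-H(\hat{Y}_X)=H(X)-H(\hat{Y}_X)\approx H(X)-I(X;Y)=H(X|Y)$, using the bijectivity of $\psi$ and the independence of the dropped block. The difference is entirely in how the first identity $H(\hat{Y}_X)\approx I(X;Y)$ is justified. The paper does not argue this internally at all: it observes that Figure~\ref{fig:nis}(a) is the NIS architecture of \cite{zhang2023nis} with an identity dynamics learner, and then imports Theorems~2 and~3 of that reference (information bottleneck plus the statement that a well-trained model's mutual information matches the data's) to get $I(X;Y)\approx I(\hat{Y}_X;\hat{Y}_X)=H(\hat{Y}_X)$ once $q$ is minimized. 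You instead build a self-contained argument via minimal sufficient statistics. Your route is more transparent about \emph{why} the claim should hold and does not depend on the reader trusting an external theorem; the paper's route buys brevity and inherits whatever regularity conditions the cited theorems actually require.

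The one place where your argument is genuinely weaker than the citation it replaces is the step $H(\hat{Y}_X\mid Y)\to 0$. Sufficiency gives $I(X;Y)=I(\hat{Y}_X;Y)=H(\hat{Y}_X)-H(\hat{Y}_X\mid Y)$, but minimal sufficiency does \emph{not} imply $H(\hat{Y}_X\mid Y)=0$: take $X=(W,N)$ and $Y=W+M$ with $W,N,M$ independent; the minimal sufficient statistic of $X$ for $Y$ is $W$, yet $H(W\mid Y)>0$ and so $H(W)>I(X;Y)$ strictly. So ``all residual uncertainty has been pushed into $V$'' is doing real work that neither dimension-minimality nor information-minimality supplies on its own; it is precisely the content that the paper outsources to the information-bottleneck theorem of \cite{zhang2023nis}, which operates in the continuous/differential-entropy setting where squeezing the channel can shrink $H(\hat{Y}_X)$ down to the transmitted information. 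You flag this obstacle yourself, which is to your credit, but your proposed fix (treat the constraints as exactly satisfied in an idealized limit) asserts the conclusion rather than deriving it. To close the gap you would either need to reproduce the bottleneck argument for this restricted encoder class or, as the paper does, cite it.
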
 

We will provide the proof in the appendix. The reason why we require that the numbers of dimensions of $X,Y$ are large is because the maximal $q$ for accurate predictions may not be integer if $p,p'$ are small. Therefore, we can enlarge the dimensions by duplicating the vectors.

To calculate the redundancy for a system with three variables: $X,Y,Z$, we can use the NIS network twice, as shown in Figure \ref{fig:nis}(b). The first NIS network is to use the intermediate variable $\hat{Y}_X$, the dense low-dimensional representation of $X$ with the minimum dimension $q$, to construct $Y$. Then, the second NIS network is to use $\hat{Z}_{\hat{Y}_X}$, the minimal dimensional dense low-dimensional representation of $\hat{Y}_X$ to construct $Z$. After these two steps, the Shannon entropy of the intermediate variable of $NIS_2$: $\hat{Z}_{\hat{Y}_X}$ can approach the redundancy. Thus, the redundancy of the system can be calculated approximately in the following way:

\begin{equation}
    \label{eqn.nis_red}
    Red(X,Y,Z)\approx H(\hat{Z}_{\hat{Y}_X}).
\end{equation}

To verify that $Red(X,Y,Z)$ calculated in this way can be regarded as the redundancy of the system, we need to prove that Equation \ref{eqn.nis_red} satisfies the property of symmetry for all the permutations of $X,Y,Z$, i.e., the following proposition:

\begin{Proposition}\label{Proposition:NIS_RED}For a system with three random variables $X,Y,Z$, without losing generality, we suppose that the conditional information satisfy $H(X)>H(X|Y)>0$ , $H(X)>H(X|Z)>0$, and $H(Y)>H(Y|X)>0$, then the redundancy calculated in Equation \ref{eqn.nis_red} is symmetric:
\begin{equation}
    \label{eqn.redundency_NIS}
    Red(X,Y,Z)\approx Red(X,Z,Y).
    \end{equation}
\end{Proposition}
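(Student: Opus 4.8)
The plan is to show that the two-stage squeeze of Equation~\ref{eqn.nis_red} computes, up to the approximations already tolerated in Proposition~\ref{Proposition:NIS_MI}, the set-intersection redundancy of Definition~\ref{definition:Set Intersection of Information}, and then to inherit symmetry either from that definition directly or from Theorem~\ref{Theorem:Symmetry of Redundant Information}. Throughout I work with the partial order under which $Q\sqsubset W$ means $H(Q|W)=0$, and I treat all entropy equalities supplied by Proposition~\ref{Proposition:NIS_MI} as given.

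First I would apply Proposition~\ref{Proposition:NIS_MI} to $NIS_1$ (input $X$, target $Y$); its hypothesis $H(X)>H(X|Y)>0$ is assumed. Two features of the bottleneck $\hat{Y}_X$ are what matter. (i) As $\hat{Y}_X=\chi(\psi(X))$ it is a deterministic function of $X$, so $\hat{Y}_X\sqsubset X$. (ii) Since the channel carrying $\hat{Y}_X$ must suffice to reconstruct $Y$, it retains all of the information $X$ shares with $Y$, so $I(\hat{Y}_X;Y)\approx I(X;Y)\approx H(\hat{Y}_X)$, whence $H(\hat{Y}_X|Y)\approx 0$ and $\hat{Y}_X\sqsubset Y$. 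Together these say $\hat{Y}_X$ is an (approximately) maximal common variable of $X$ and $Y$: it is $\sqsubset$ each, and a short mutual-information estimate using the Markov chain $X-\hat{Y}_X-Y$ shows any $Q$ with $Q\sqsubset X$ and $Q\sqsubset Y$ also satisfies $Q\sqsubset\hat{Y}_X$.

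Next I would apply Proposition~\ref{Proposition:NIS_MI} to $NIS_2$ (input $\hat{Y}_X$, target $Z$). Here one must first verify the hypothesis $H(\hat{Y}_X)>H(\hat{Y}_X|Z)>0$; this follows from the stated assumptions in the non-degenerate regime, while the degenerate sub-cases $I(\hat{Y}_X;Z)=0$ (i.e.\ $Red=0$) and $H(\hat{Y}_X|Z)=0$ are handled separately, both orders of the squeeze then producing a trivial bottleneck. In the main case the second bottleneck $\hat{Z}_{\hat{Y}_X}$ satisfies $H(\hat{Z}_{\hat{Y}_X})\approx I(\hat{Y}_X;Z)$, is a function of $\hat{Y}_X$ (hence $\sqsubset X$ and $\sqsubset Y$), and by the argument of step~(ii) is also $\sqsubset Z$; so it is a common variable of $X,Y,Z$. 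It is moreover maximal among these: if $G\sqsubset X$, $G\sqsubset Y$, $G\sqsubset Z$, then $G\sqsubset\hat{Y}_X$ by the maximality just established, so $G$ is common to $\hat{Y}_X$ and $Z$ and hence $G\sqsubset\hat{Z}_{\hat{Y}_X}$. Consequently $H(\hat{Z}_{\hat{Y}_X})=H(G^{\ast})$ for the maximal common variable $G^{\ast}$ of $\{X,Y,Z\}$; and because among common variables $Q$ of $X$ and $Y$ the choice $Q=\hat{Y}_X$ is feasible and, by data processing, maximizes $I(Q;Z)$, this quantity equals $\sup\{I(Q;Z):Q\sqsubset X, Q\sqsubset Y\}=Red(Z:X,Y)=Red(X,Y,Z)$. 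Since $G^{\ast}$ is defined symmetrically in $X,Y,Z$, exchanging the roles of $Y$ and $Z$ gives $H(\hat{Y}_{\hat{Z}_X})=H(G^{\ast})=Red(X,Z,Y)$, which is the assertion; alternatively, once $H(\hat{Z}_{\hat{Y}_X})\approx Red(X,Y,Z)$ is known, Theorem~\ref{Theorem:Symmetry of Redundant Information} closes the argument at once.

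The step I expect to be the main obstacle is the claim that the minimal-dimension NIS bottleneck genuinely realizes a \emph{maximal} common variable rather than merely some common variable --- in particular, that its Shannon entropy is $I(X;Y)$ and not the (typically strictly smaller) common information one would get in a purely discrete setting. This is exactly where the ``large-dimension'', continuous, invertible-network hypotheses behind Proposition~\ref{Proposition:NIS_MI} are doing the real work, so the cleanest route is to quote Proposition~\ref{Proposition:NIS_MI} as a black box for every entropy equality and concentrate effort on (a) the lattice-style associativity step showing that squeezing twice selects the symmetric triple-common quantity $G^{\ast}$, and (b) a careful audit of the Proposition~\ref{Proposition:NIS_MI} hypotheses and the degenerate cases at the second stage.
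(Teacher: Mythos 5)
Your proposal is essentially correct at the level of rigor the paper itself operates at, but it takes a genuinely different route from the paper's proof. The paper never invokes Definition \ref{definition:Set Intersection of Information} or Theorem \ref{Theorem:Symmetry of Redundant Information}: it proceeds by direct entropy bookkeeping, writing $Red(X,Y,Z)\approx H(X)-H(X|Y)-H(\hat{Y}_X|Z)$ and $Red(X,Z,Y)\approx H(X)-H(X|Z)-H(\hat{Z}_X|Y)$ via Proposition \ref{Proposition:NIS_MI}, and then deriving the identity $H(X|Z)+H(\hat{Z}_X|Y)=H(X|Y)+H(\hat{Y}_X|Z)$ from two expansions of $H(X|Y,Z)$, which rests on the assumption that the discarded noise $U_{\hat{Y}_X}$ is independent of \emph{all} variables, including $Z$, which never appears in the first training stage; under that assumption the computation in effect shows the NIS redundancy equals the co-information $I(X;Y)-I(X;Y|Z)$, which is manifestly symmetric. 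You instead identify $\hat{Y}_X$ as an (approximately) maximal common variable of $X$ and $Y$ under the order $H(Q|W)=0$, conclude $H(\hat{Z}_{\hat{Y}_X})\approx I(\hat{Y}_X;Z)=\sup\{I(Q;Z):Q\sqsubset X,\ Q\sqsubset Y\}$, and then import symmetry from Theorem \ref{Theorem:Symmetry of Redundant Information}. What your route buys is a conceptual link between the numerical method and the paper's set-theoretic definition of redundancy, and notably it avoids the paper's most fragile assumption (independence of the first-stage noise from $Z$); what it costs is exactly the obstacle you flag yourself, namely that a variable with $H(Q|X)\approx H(Q|Y)\approx 0$ and $H(Q)\approx I(X;Y)$ does not exist for generic discrete variables (the G\'acs--K\"orner gap), so your identification leans entirely on the idealized high-dimensional claims of Proposition \ref{Proposition:NIS_MI} --- but the paper's own argument leans on assumptions of at least comparable strength, so this is not a defect relative to the target. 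Your attention to the hypotheses of Proposition \ref{Proposition:NIS_MI} at the second stage and to the degenerate cases is a point of care the paper's proof omits.
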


To be noticed that $Red(X,Z,Y)\approx H(\hat{Y}_{\hat{Z}_X})$ is different from $Red(X,Y,Z)$ in the way that the order of the predictions from $X$ is $Z$ and then $Y$.

The proof of Theorem \ref{Proposition:NIS_RED} is also provided in the appendix. With the calculation of redundancy, we can easily calculate unique and synergistic information atoms. Furthermore, we can extend the method to systems with more variables by just stacking more NIS networks in the similar way as shown in Figure \ref{fig:nis} (b).

However, there are two disadvantages to this method, one is that the calculation is inaccurate and requires a large number of training epochs. Second, the numbers of dimensions of all variables must be large enough such that the independent information among the variables can be discarded by dropping out the dimensions. Further studies are needed.

To verify that the NIS framework can calculate redundant information, we conducted numerical experiments using Case3 as an example, as Figure\ref{fig:nis calculation} shows, where the mutual information between each pair of variables and the redundant information is 2 bits. 

In this experiment, variable $X_1$ is used as the input of NIS1 in the framework, with $X_2$ predicted as the target $Y$, and the intermediate variable $\hat{Y}_X$ is fed into NIS2 to predict $X_3$. In this experiment, both inputs and targets were expanded to 64 dimensions by direct replication of the original variables, and let the two intermediate variables in the NIS maintain consistent dimensions, both denoted by $q$ . The minimum dimension of $\hat{Y}_X$ and $\hat{Z}_{\hat{Y}_X}$ are selected by monitoring the changes in the loss curves. 

From the above results, it can be seen that when $q$, the dimension of intermediate variable, is relatively large, the entropy of the intermediate variable is quite accurate for mutual information or redundant information. As the $q$ drops below a threshold, the loss signally increases, indicating that the intermediate variable cannot capture all the mutual information and the redundant information.

\begin{figure}[htbp]
    \centering
    \includegraphics[width=1\linewidth]{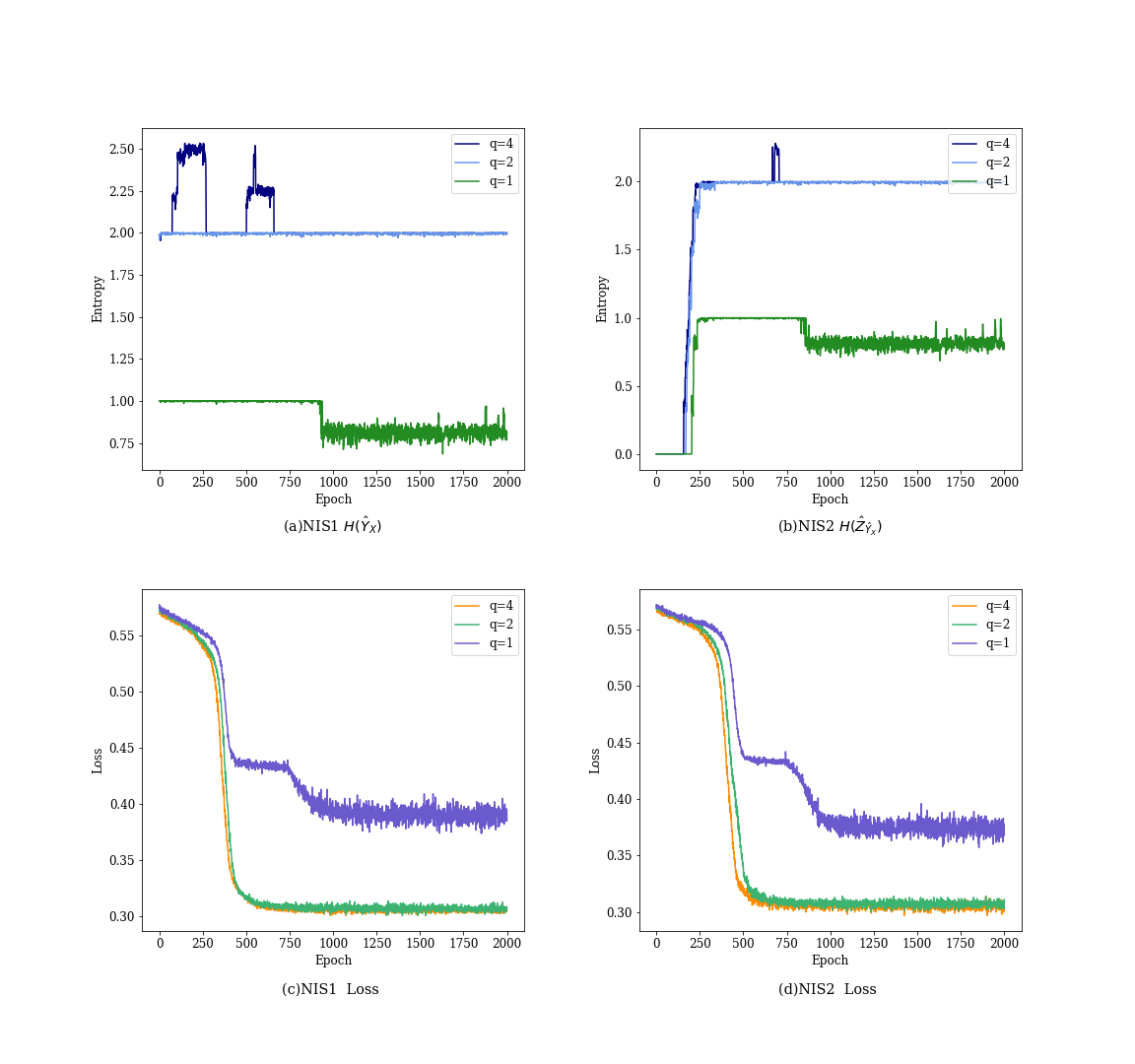}
    \caption{(a) The changes of $H(\hat{Y}_X)$ in NIS1 under $q=4,2,1$ respectively; (b) The changes of $H(\hat{Z}_{\hat{Y}_X})$ in NIS2 under $q=4,2,1$ respectively; (c) The changes of training loss in NIS1 under $q=4,2,1$ respectively; (d) The changes of training loss in NIS2 under $q=4,2,1$ respectively. The same experiments were conducted for the other three cases, and the redundant information could be accurately calculated under the NIS framework.}
    \label{fig:nis calculation}
\end{figure}

To sum up, NIS can compute the system's redundancy numerically, which subsequently allows us to extract additional information atoms. While the NIS, in its current state as a continuous computing framework, grapples with precision issues when resolving discrete problems, it presents a promising pathway for quantifying information atoms. As part of our ongoing research, we aim to enhance the NIS's precision in tackling discrete problems and explore the problems into general continuous variables.

\section{Discussion}
\label{sec:Discussion}

\subsection{SID and PID}
As an information decomposition method compatible with PID's conceptual framework, SID expanded the scope of information decomposed from the mutual information of the source variables and the target variable to the whole information of the system, and shows the symmetry of information atoms among variables. It is worth noting that SID is not based on any existing PID calculation methods, but a set of computational properties that should be satisfied.

Based on these expansion, the biggest difference between SID and PID is the analysis perspective: PID focuses on the directed pairwise (second-order) relationship between the set of source variables and the target variable, while SID focuses on all relationship among variables in the system, from pairwise to undirected high-order relationship. This exhaustion of relationships enables SID to pay attention to the relationships among the source variables and the high-order symmetric relationships that PID ignores. Take Case \ref{sec:Case Studies} as an example. From the perspective of PID, there are directional redundant, synergistic and unique information from $X_1$ and $X_2$ to the target variable, but the information interaction relationship between $X_1$ and $X_2$ is unknown. Also, it cannot be realized from PID that the synergistic information provided by $X_1, X_2$ to the target variable is only a partial understanding of the undirected synergistic effect among the three variables, and this effect also occurs when $X1$ or $X2$ as target. In addition, on the basis of being compatible with PID, SID adds more constraints, such as Theorem \ref{Theorem:Symmetry of Redundant Information} \ref{Theorem:Symmetry of Unique Information} \ref{Theorem:Symmetry of Synergistic Information}, so it provides more ways in the calculation of information decomposition. For example, in Proposition \ref{Proposition:Direct Method}, it can be inferred that the redundant information is zero through the presence of variable pairs with zero mutual information in the variable set, which is not satisfied in some existing PID calculation methods. 

To sum up, SID extents the analysis scope and reveals several essential natures of information atoms on the basis of being compatible with the PID framework, and greatly expands the application scenarios of information decomposition, which will be discussed in the next few paragraphs.

\subsection{SID and Higher-order Measurement}
The holism-versus-reductionism debate persists in modern literature \cite{mitchell2009complexity}. Those who hold a reductionist view believe that any system can be divided into many subsystems, and we can fully understand the entire system by studying the properties of the subsystems and their connections, which is also the research philosophy followed by most disciplines \cite{gallagher1999beyond}. But holism holds that the system should be treated as a whole because the splitting of the system will inevitably lose the understanding of some of its properties \cite{smuts1926holism}. This contradiction seems irreconcilable when we don't discuss in detail how to decompose the system.

However, the SID offers a perspective that can explain this conflict by accounting for higher-order relationships in the system that are not captured by previous measures. To better divide the different measures, we divide information entropy into first-order measures, which reflect a certain attribute of a single variable. Mutual information and conditional entropy, on the other hand, can be divided into second-order measures, which capture some aspects of pairwise relationships between variables \cite{cover1999elements}. Although among the second-order measurement, information theory's cross-entropy can measure the information shared among multiple variables, it still captures linear superpositions of second-order relationships, which provides limited insight into multivariate interactions. But under the SID framework, redundant, synergistic, and unique information can be regarded as three- or higher-order measures, revealing a new dimension of multivariate relationships that is entirely distinct from the first and second orders and facilitating a deeper comprehension of higher-order system relationships. In the case analysis, the internal structure of Case 1 aligns well with the results of the second-order measures, and can be considered a reducible, decomposable system. Cases 2, 3, and 4, however, have internal structures that cannot be captured by second-order measures and are thus regarded by holism as systems that cannot be decomposed and understood individually. To some extent, SID and the case analysis offer an explanation that bridges the gap between holism and reductionism; that is, some of the system properties that holism insists cannot be understood separately might be explained by higher-order measures or decomposition methods.

\subsection{Potential Application}
In addition to philosophical discussions, higher-order measures can also be applied to many fields. A foreseeable application across many domains comes from that SID deepens our understanding of data, measures, and information. In the case studies \ref{sec:Case Studies}, the data contains information about the construction of the four variable systems, but the inner relationship of the system cannot be captured by probability measures or existing information measures. That means the incompleteness of measures may limit our ability to analyze existing systems, even if we have obtained complete data. Therefore, conducting higher-order information measures in the analysis of complex systems may offer valuable insights, especially in the field where traditional information measures fail to capture the relationship among systems. A worth exploring direction is the quantitative analysis Higher-order Networks \cite{bianconi2021higher}. Since SID can provide a data-driven framework for identifying and analyzing of high-order network structures, it may potentially impact the analysis and understanding of complex systems across various domains \cite{battiston2021physics}. For example, in studying neural networks and brain connectivity \cite{bullmore2009complex}, the SID framework can provide further insights into the higher-order information flow between multiple neurons or brain regions, which will allow us to directly generate higher-order network models between neurons through the temporal data of multiple neurons, and use this model to explain the implementation of specific functions; in ecological \cite{levin2005self}, financial, or social systems, the quantitative characterization of high-order relationships among multiple agents can assist in the development of more accurate models and forecasts, as well as the design of effective control methods. Also, this combination is also a two-way promotion: Since Venn diagrams have limitation on presenting more than three variable systems on a two-dimensional plane, hypergraphs in the field of Higher-order Networks may be a better tool for visualizing SID frameworks.

Another field where SID may interact is Causal Science, since it is a field for studying the intrinsic relationships between multiple variables. One of the goals of causal science is to search for invariance in the system. We hope that the revealed properties of the system are independent of the distribution of the data. However, the results obtained from SID can vary with changes in the data distribution. Therefore, adopting the methods of causal science in SID to reveal system invariance is one direction worth to explore. In addition, conditional independence plays an important role in causal discovery and causal inference in multivariate systems \cite{pearl2009causality}, while in the quantitative calculation of SID, conditional independence also plays a similar role in eliminating the uncertainty of higher-order relations, refer to the calculation method in the first way. Therefore, studying the properties of conditional independence within the framework of SID may provide a bridge between causal science and SID. The benefits of this association are mutual: from the perspective of Pearl Causal Hierarchy theory \cite{pearl2018book}, SID is a research technique that utilizes observational data, which is at the lowest rung of the causal ladder. Investigating whether lifting the approach to higher rungs of causal ladder can yield deeper insights into the system is an area worth exploring, for instance, by incorporating causal graphs (DAGs) into SID methods, etc. Furthermore, the symmetry property observed in SID highlights that synergistic and redundant information are characteristics that emerge at the system level. This insight guides us to leverage these properties when constructing more powerful models in causal representation learning and interpreting latent variables in cross-level causal analysis.

Apart from the above fields, SID may also has potential applications. Since information atoms provide a more refined division of information entropy, when the physical meaning of information atoms within the SID framework is revealed, specific information atoms may also become indicators for some optimization or learning problems; The symmetry property of synergistic information in SID may provide inspiration for the information disclosure, an important application of PID in information protection field. In summary, SID, as a progress in the underlying measurement, may play a role in many application scenarios, which is also the focus of our next stage of work.

\subsection{Limitations and Future Works}
In addition to the above-mentioned promising progress and expectations, there are still several limitations worthy of attention. The first limitation is the absence of a fully compatible quantitative method for the proposed framework, which restricts the practical application of SID in addressing real-world problems. As we continue to develop and refine the SID framework, it is a priority to develop robust computation methods for calculating SID components and consider how higher-order information measures can be integrated into existing analytical approaches. Furthermore, the existing proofs of framework properties and computational methods have only been established for three-variable systems. Although extending current work to general multivariate systems is not a formidable challenge, it contains many aspects of work, such as how to present the decomposition results of multivariate systems on a two-dimensional plane; how to optimize the calculation algorithm to avoid the exponential calculation cost as the number of variables increases, which will be considered in the next stage of research. For the above-mentioned and any possible problems, we cordially invite other scholars who share an interest in this field to collaborate on addressing the existing challenges of SID and contribute to the model's refinement.

\section{Conclusion}
\label{sec:Conclusion}

In this study, we introduced the System Information Decomposition (SID) framework, which offers novel insights for decomposing complex systems and analyzing higher-order relationships while addressing the limitations of existing information decomposition methods.By proving the symmetries of information atoms and connecting them to higher-order relationships, we show that the SID framework can provide insights and advance beyond existing measures in understanding the internal interactions and dynamics of complex systems.Furthermore, we explored the far-reaching implications that SID's unveiling of higher-order measures could have on the philosophical aspects of systems research, higher-order networks, and causal science. Despite the fact that current research on SID still faces challenges in terms of quantitative calculations and multivariate analysis, we believe that continued collaboration and exploration by the scientific community will help overcome these obstacles.In conclusion, the SID framework signifies a promising new direction for investigating complex systems and information decomposition. We anticipate that the SID analysis framework will serve as a valuable tool across an expanding array of fields in the future.

\section*{Acknowledgments}
We sincerely thank all the non-authors who played a crucial role in its successful completion. Our heartfelt appreciation goes to the Swarma Club, an open academic community for Complex Systems, where the Causal Emergence reading club provided the foundation for the ideas presented in this paper. We are also very grateful to Professor Duguid at UC Berkeley, whose course steadied an author's orientation towards understanding systems from an information perspective, serving as the genesis of this paper. We are also very grateful to the reviewers for their constructive comments, which have improved the theoretical rigor and comprehensiveness of the paper.

\newpage
\bibliographystyle{unsrt}
\bibliography{references}

\newpage
\appendix
\section{Appendix}
\label{Appendix}

\subsection{Case Table}
\label{Table:Case Table}

\setlength{\arrayrulewidth}{0.7pt}
\renewcommand{\arraystretch}{0.71}
\begin{longtable}{ | c c c c | c c c c | c c c c | c c c c | c c c c | c c c c | }

\hline
\multicolumn{4}{|c|}{$X_1$} & \multicolumn{4}{c|}{$X_2$} & \multicolumn{4}{c|}{$X_3$} & \multicolumn{4}{c|}{$X_4$} & \multicolumn{4}{c|}{$X_5$} & \multicolumn{4}{c|}{$X_6$} \\
\hline
$a$ & $b$ & $c$ & $d$ & $a$ & $b$ & $e$ & $f$ & $c$ & $d$ & $e$ & $f$ & $a$ & $c$ & $e$ & $h$ & $a$ & $b$ & $g$ & $h$ & $a$ & $b$ & $i$ & $j$ \\
\hline
\rowcolor{yellow}
0 & 0 & 0 & 0 & 0 & 0 & 0 & 0 & 0 & 0 & 0 & 0 & 0 & 0 & 0 & 0 & 0 & 0 & 0 & 0 & 0 & 0 & 0 & 0 \\
\rowcolor{yellow}
0 & 0 & 0 & 0 & 0 & 0 & 0 & 1 & 0 & 0 & 0 & 1 & 0 & 0 & 0 & 1 & 0 & 0 & 0 & 1 & 0 & 0 & 1 & 0 \\
\rowcolor{yellow}
0 & 0 & 0 & 0 & 0 & 0 & 1 & 0 & 0 & 0 & 1 & 0 & 0 & 0 & 1 & 0 & 0 & 0 & 1 & 0 & 0 & 0 & 0 & 1 \\
\rowcolor{yellow}
0 & 0 & 0 & 0 & 0 & 0 & 1 & 1 & 0 & 0 & 1 & 1 & 0 & 0 & 1 & 1 & 0 & 0 & 1 & 1 & 0 & 0 & 1 & 1 \\
0 & 0 & 0 & 1 & \cellcolor{pink}\textbf{0} & \cellcolor{pink}\textbf{0} & \cellcolor{pink}\textbf{0} & \cellcolor{pink}\textbf{0} & 0 & 1 & 0 & 0 & \cellcolor{pink}\textbf{0} & \cellcolor{pink}\textbf{0} & \cellcolor{pink}\textbf{0} & \cellcolor{pink}\textbf{1} & \cellcolor{pink}0 & \cellcolor{pink}0 & \cellcolor{pink}0 & \cellcolor{pink}1 & \cellcolor{pink}0 & \cellcolor{pink}0 & \cellcolor{pink}0 & \cellcolor{pink}1 \\
0 & 0 & 0 & 1 & \cellcolor{pink}\textbf{0} & \cellcolor{pink}\textbf{0} & \cellcolor{pink}\textbf{0} & \cellcolor{pink}\textbf{1} & 0 & 1 & 0 & 1 & \cellcolor{pink}\textbf{0} & \cellcolor{pink}\textbf{0} & \cellcolor{pink}\textbf{0} & \cellcolor{pink}\textbf{0} & \cellcolor{pink}0 & \cellcolor{pink}0 & \cellcolor{pink}0 & \cellcolor{pink}0 & \cellcolor{pink}0 & \cellcolor{pink}0 & \cellcolor{pink}1 & \cellcolor{pink}1 \\
0 & 0 & 0 & 1 & \cellcolor{pink}\textbf{0} & \cellcolor{pink}\textbf{0} & \cellcolor{pink}\textbf{1} & \cellcolor{pink}\textbf{0} & 0 & 1 & 1 & 0 & \cellcolor{pink}\textbf{0} & \cellcolor{pink}\textbf{0} & \cellcolor{pink}\textbf{1} & \cellcolor{pink}\textbf{1} & \cellcolor{pink}0 & \cellcolor{pink}0 & \cellcolor{pink}1 & \cellcolor{pink}1 & \cellcolor{pink}0 & \cellcolor{pink}0 & \cellcolor{pink}0 & \cellcolor{pink}0 \\
0 & 0 & 0 & 1 & \cellcolor{pink}\textbf{0} & \cellcolor{pink}\textbf{0} & \cellcolor{pink}\textbf{1} & \cellcolor{pink}\textbf{1} & 0 & 1 & 1 & 1 & \cellcolor{pink}\textbf{0} & \cellcolor{pink}\textbf{0} & \cellcolor{pink}\textbf{1} & \cellcolor{pink}\textbf{0} & \cellcolor{pink}0 & \cellcolor{pink}0 & \cellcolor{pink}1 & \cellcolor{pink}0 & \cellcolor{pink}0 & \cellcolor{pink}0 & \cellcolor{pink}1 & \cellcolor{pink}0 \\
0 & 0 & 1 & 0 & \cellcolor{pink}\textit{0} & \cellcolor{pink}\textit{0} & \cellcolor{pink}\textit{0} & \cellcolor{pink}\textit{0} & 1 & 0 & 0 & 0 & 0 & 1 & 0 & 0 & \cellcolor{pink}0 & \cellcolor{pink}0 & \cellcolor{pink}1 & \cellcolor{pink}0 & \cellcolor{pink}0 & \cellcolor{pink}0 & \cellcolor{pink}1 & \cellcolor{pink}0 \\
0 & 0 & 1 & 0 & \cellcolor{pink}\textit{0} & \cellcolor{pink}\textit{0} & \cellcolor{pink}\textit{0} & \cellcolor{pink}\textit{1} & 1 & 0 & 0 & 1 & 0 & 1 & 0 & 1 & \cellcolor{pink}0 & \cellcolor{pink}0 & \cellcolor{pink}1 & \cellcolor{pink}1 & \cellcolor{pink}0 & \cellcolor{pink}0 & \cellcolor{pink}0 & \cellcolor{pink}0 \\
0 & 0 & 1 & 0 & \cellcolor{pink}\textit{0} & \cellcolor{pink}\textit{0} & \cellcolor{pink}\textit{1} & \cellcolor{pink}\textit{0} & 1 & 0 & 1 & 0 & 0 & 1 & 1 & 0 & \cellcolor{pink}0 & \cellcolor{pink}0 & \cellcolor{pink}0 & \cellcolor{pink}0 & \cellcolor{pink}0 & \cellcolor{pink}0 & \cellcolor{pink}1 & \cellcolor{pink}1 \\
0 & 0 & 1 & 0 & \cellcolor{pink}\textit{0} & \cellcolor{pink}\textit{0} & \cellcolor{pink}\textit{1} & \cellcolor{pink}\textit{1} & 1 & 0 & 1 & 1 & 0 & 1 & 1 & 1 & \cellcolor{pink}0 & \cellcolor{pink}0 & \cellcolor{pink}0 & \cellcolor{pink}1 & \cellcolor{pink}0 & \cellcolor{pink}0 & \cellcolor{pink}0 & \cellcolor{pink}1 \\
0 & 0 & 1 & 1 & \cellcolor{pink}\textit{0} & \cellcolor{pink}\textit{0} & \cellcolor{pink}\textit{0} & \cellcolor{pink}\textit{0} & 1 & 1 & 0 & 0 & 0 & 1 & 0 & 1 & \cellcolor{pink}0 & \cellcolor{pink}0 & \cellcolor{pink}1 & \cellcolor{pink}1 & \cellcolor{pink}0 & \cellcolor{pink}0 & \cellcolor{pink}1 & \cellcolor{pink}1 \\
0 & 0 & 1 & 1 & \cellcolor{pink}\textit{0} & \cellcolor{pink}\textit{0} & \cellcolor{pink}\textit{0} & \cellcolor{pink}\textit{1} & 1 & 1 & 0 & 1 & 0 & 1 & 0 & 0 & \cellcolor{pink}0 & \cellcolor{pink}0 & \cellcolor{pink}1 & \cellcolor{pink}0 & \cellcolor{pink}0 & \cellcolor{pink}0 & \cellcolor{pink}0 & \cellcolor{pink}1 \\
0 & 0 & 1 & 1 & \cellcolor{pink}\textit{0} & \cellcolor{pink}\textit{0} & \cellcolor{pink}\textit{1} & \cellcolor{pink}\textit{0} & 1 & 1 & 1 & 0 & 0 & 1 & 1 & 1 & \cellcolor{pink}0 & \cellcolor{pink}0 & \cellcolor{pink}0 & \cellcolor{pink}1 & \cellcolor{pink}0 & \cellcolor{pink}0 & \cellcolor{pink}1 & \cellcolor{pink}0 \\
0 & 0 & 1 & 1 & \cellcolor{pink}\textit{0} & \cellcolor{pink}\textit{0} & \cellcolor{pink}\textit{1} & \cellcolor{pink}\textit{1} & 1 & 1 & 1 & 1 & 0 & 1 & 1 & 0 & \cellcolor{pink}0 & \cellcolor{pink}0 & \cellcolor{pink}0 & \cellcolor{pink}0 & \cellcolor{pink}0 & \cellcolor{pink}0 & \cellcolor{pink}0 & \cellcolor{pink}0 \\
0 & 1 & 0 & 0 & 0 & 1 & 0 & 0 & \cellcolor{pink}0 & \cellcolor{pink}0 & \cellcolor{pink}0 & \cellcolor{pink}0 & \cellcolor{pink}\textit{0} & \cellcolor{pink}\textit{0} & \cellcolor{pink}\textit{0} & \cellcolor{pink}\textit{0} & 0 & 1 & 0 & 0 & 0 & 1 & 0 & 0 \\
0 & 1 & 0 & 0 & 0 & 1 & 0 & 1 & \cellcolor{pink}0 & \cellcolor{pink}0 & \cellcolor{pink}0 & \cellcolor{pink}1 & \cellcolor{pink}\textit{0} & \cellcolor{pink}\textit{0} & \cellcolor{pink}\textit{0} & \cellcolor{pink}\textit{1} & 0 & 1 & 0 & 1 & 0 & 1 & 1 & 0 \\
0 & 1 & 0 & 0 & 0 & 1 & 1 & 0 & \cellcolor{pink}0 & \cellcolor{pink}0 & \cellcolor{pink}1 & \cellcolor{pink}0 & \cellcolor{pink}\textit{0} & \cellcolor{pink}\textit{0} & \cellcolor{pink}\textit{1} & \cellcolor{pink}\textit{0} & 0 & 1 & 1 & 0 & 0 & 1 & 0 & 1 \\
0 & 1 & 0 & 0 & 0 & 1 & 1 & 1 & \cellcolor{pink}0 & \cellcolor{pink}0 & \cellcolor{pink}1 & \cellcolor{pink}1 & \cellcolor{pink}\textit{0} & \cellcolor{pink}\textit{0} & \cellcolor{pink}\textit{1} & \cellcolor{pink}\textit{1} & 0 & 1 & 1 & 1 & 0 & 1 & 1 & 1 \\
0 & 1 & 0 & 1 & 0 & 1 & 0 & 0 & 0 & 1 & 0 & 0 & \cellcolor{pink}\textit{0} & \cellcolor{pink}\textit{0} & \cellcolor{pink}\textit{0} & \cellcolor{pink}\textit{1} & 0 & 1 & 0 & 1 & 0 & 1 & 0 & 1 \\
0 & 1 & 0 & 1 & 0 & 1 & 0 & 1 & 0 & 1 & 0 & 1 & \cellcolor{pink}\textit{0} & \cellcolor{pink}\textit{0} & \cellcolor{pink}\textit{0} & \cellcolor{pink}\textit{0} & 0 & 1 & 0 & 0 & 0 & 1 & 1 & 1 \\
0 & 1 & 0 & 1 & 0 & 1 & 1 & 0 & 0 & 1 & 1 & 0 & \cellcolor{pink}\textit{0} & \cellcolor{pink}\textit{0} & \cellcolor{pink}\textit{1} & \cellcolor{pink}\textit{1} & 0 & 1 & 1 & 1 & 0 & 1 & 0 & 0 \\
0 & 1 & 0 & 1 & 0 & 1 & 1 & 1 & 0 & 1 & 1 & 1 & \cellcolor{pink}\textit{0} & \cellcolor{pink}\textit{0} & \cellcolor{pink}\textit{1} & \cellcolor{pink}\textit{0} & 0 & 1 & 1 & 0 & 0 & 1 & 1 & 0 \\
0 & 1 & 1 & 0 & 0 & 1 & 0 & 0 & 1 & 0 & 0 & 0 & 0 & 1 & 0 & 0 & 0 & 1 & 1 & 0 & 0 & 1 & 1 & 0 \\
0 & 1 & 1 & 0 & 0 & 1 & 0 & 1 & 1 & 0 & 0 & 1 & 0 & 1 & 0 & 1 & 0 & 1 & 1 & 1 & 0 & 1 & 0 & 0 \\
0 & 1 & 1 & 0 & 0 & 1 & 1 & 0 & 1 & 0 & 1 & 0 & 0 & 1 & 1 & 0 & 0 & 1 & 0 & 0 & 0 & 1 & 1 & 1 \\
0 & 1 & 1 & 0 & 0 & 1 & 1 & 1 & 1 & 0 & 1 & 1 & 0 & 1 & 1 & 1 & 0 & 1 & 0 & 1 & 0 & 1 & 0 & 1 \\
0 & 1 & 1 & 1 & 0 & 1 & 0 & 0 & 1 & 1 & 0 & 0 & 0 & 1 & 0 & 1 & 0 & 1 & 1 & 1 & 0 & 1 & 1 & 1 \\
0 & 1 & 1 & 1 & 0 & 1 & 0 & 1 & 1 & 1 & 0 & 1 & 0 & 1 & 0 & 0 & 0 & 1 & 1 & 0 & 0 & 1 & 0 & 1 \\
0 & 1 & 1 & 1 & 0 & 1 & 1 & 0 & 1 & 1 & 1 & 0 & 0 & 1 & 1 & 1 & 0 & 1 & 0 & 1 & 0 & 1 & 1 & 0 \\
0 & 1 & 1 & 1 & 0 & 1 & 1 & 1 & 1 & 1 & 1 & 1 & 0 & 1 & 1 & 0 & 0 & 1 & 0 & 0 & 0 & 1 & 0 & 0 \\
1 & 0 & 0 & 0 & 1 & 0 & 0 & 0 & \cellcolor{pink}0 & \cellcolor{pink}0 & \cellcolor{pink}0 & \cellcolor{pink}0 & 1 & 0 & 0 & 0 & 1 & 0 & 0 & 0 & 1 & 0 & 0 & 0 \\
1 & 0 & 0 & 0 & 1 & 0 & 0 & 1 & \cellcolor{pink}0 & \cellcolor{pink}0 & \cellcolor{pink}0 & \cellcolor{pink}1 & 1 & 0 & 0 & 1 & 1 & 0 & 0 & 1 & 1 & 0 & 1 & 0 \\
1 & 0 & 0 & 0 & 1 & 0 & 1 & 0 & \cellcolor{pink}0 & \cellcolor{pink}0 & \cellcolor{pink}1 & \cellcolor{pink}0 & 1 & 0 & 1 & 0 & 1 & 0 & 1 & 0 & 1 & 0 & 0 & 1 \\
1 & 0 & 0 & 0 & 1 & 0 & 1 & 1 & \cellcolor{pink}0 & \cellcolor{pink}0 & \cellcolor{pink}1 & \cellcolor{pink}1 & 1 & 0 & 1 & 1 & 1 & 0 & 1 & 1 & 1 & 0 & 1 & 1 \\
1 & 0 & 0 & 1 & 1 & 0 & 0 & 0 & 0 & 1 & 0 & 0 & 1 & 0 & 0 & 1 & 1 & 0 & 0 & 1 & 1 & 0 & 0 & 1 \\
1 & 0 & 0 & 1 & 1 & 0 & 0 & 1 & 0 & 1 & 0 & 1 & 1 & 0 & 0 & 0 & 1 & 0 & 0 & 0 & 1 & 0 & 1 & 1 \\
1 & 0 & 0 & 1 & 1 & 0 & 1 & 0 & 0 & 1 & 1 & 0 & 1 & 0 & 1 & 1 & 1 & 0 & 1 & 1 & 1 & 0 & 0 & 0 \\
1 & 0 & 0 & 1 & 1 & 0 & 1 & 1 & 0 & 1 & 1 & 1 & 1 & 0 & 1 & 0 & 1 & 0 & 1 & 0 & 1 & 0 & 1 & 0 \\
1 & 0 & 1 & 0 & 1 & 0 & 0 & 0 & 1 & 0 & 0 & 0 & 1 & 1 & 0 & 0 & 1 & 0 & 1 & 0 & 1 & 0 & 1 & 0 \\
1 & 0 & 1 & 0 & 1 & 0 & 0 & 1 & 1 & 0 & 0 & 1 & 1 & 1 & 0 & 1 & 1 & 0 & 1 & 1 & 1 & 0 & 0 & 0 \\
1 & 0 & 1 & 0 & 1 & 0 & 1 & 0 & 1 & 0 & 1 & 0 & 1 & 1 & 1 & 0 & 1 & 0 & 0 & 0 & 1 & 0 & 1 & 1 \\
1 & 0 & 1 & 0 & 1 & 0 & 1 & 1 & 1 & 0 & 1 & 1 & 1 & 1 & 1 & 1 & 1 & 0 & 0 & 1 & 1 & 0 & 0 & 1 \\
1 & 0 & 1 & 1 & 1 & 0 & 0 & 0 & 1 & 1 & 0 & 0 & 1 & 1 & 0 & 1 & 1 & 0 & 1 & 1 & 1 & 0 & 1 & 1 \\
1 & 0 & 1 & 1 & 1 & 0 & 0 & 1 & 1 & 1 & 0 & 1 & 1 & 1 & 0 & 0 & 1 & 0 & 1 & 0 & 1 & 0 & 0 & 1 \\
1 & 0 & 1 & 1 & 1 & 0 & 1 & 0 & 1 & 1 & 1 & 0 & 1 & 1 & 1 & 1 & 1 & 0 & 0 & 1 & 1 & 0 & 1 & 0 \\
1 & 0 & 1 & 1 & 1 & 0 & 1 & 1 & 1 & 1 & 1 & 1 & 1 & 1 & 1 & 0 & 1 & 0 & 0 & 0 & 1 & 0 & 0 & 0 \\
1 & 1 & 0 & 0 & 1 & 1 & 0 & 0 & \cellcolor{pink}0 & \cellcolor{pink}0 & \cellcolor{pink}0 & \cellcolor{pink}0 & 1 & 0 & 0 & 0 & 1 & 1 & 0 & 0 & 1 & 1 & 0 & 0 \\
1 & 1 & 0 & 0 & 1 & 1 & 0 & 1 & \cellcolor{pink}0 & \cellcolor{pink}0 & \cellcolor{pink}0 & \cellcolor{pink}1 & 1 & 0 & 0 & 1 & 1 & 1 & 0 & 1 & 1 & 1 & 1 & 0 \\
1 & 1 & 0 & 0 & 1 & 1 & 1 & 0 & \cellcolor{pink}0 & \cellcolor{pink}0 & \cellcolor{pink}1 & \cellcolor{pink}0 & 1 & 0 & 1 & 0 & 1 & 1 & 1 & 0 & 1 & 1 & 0 & 1 \\
1 & 1 & 0 & 0 & 1 & 1 & 1 & 1 & \cellcolor{pink}0 & \cellcolor{pink}0 & \cellcolor{pink}1 & \cellcolor{pink}1 & 1 & 0 & 1 & 1 & 1 & 1 & 1 & 1 & 1 & 1 & 1 & 1 \\
1 & 1 & 0 & 1 & 1 & 1 & 0 & 0 & 0 & 1 & 0 & 0 & 1 & 0 & 0 & 1 & 1 & 1 & 0 & 1 & 1 & 1 & 0 & 1 \\
1 & 1 & 0 & 1 & 1 & 1 & 0 & 1 & 0 & 1 & 0 & 1 & 1 & 0 & 0 & 0 & 1 & 1 & 0 & 0 & 1 & 1 & 1 & 1 \\
1 & 1 & 0 & 1 & 1 & 1 & 1 & 0 & 0 & 1 & 1 & 0 & 1 & 0 & 1 & 1 & 1 & 1 & 1 & 1 & 1 & 1 & 0 & 0 \\
1 & 1 & 0 & 1 & 1 & 1 & 1 & 1 & 0 & 1 & 1 & 1 & 1 & 0 & 1 & 0 & 1 & 1 & 1 & 0 & 1 & 1 & 1 & 0 \\
1 & 1 & 1 & 0 & 1 & 1 & 0 & 0 & 1 & 0 & 0 & 0 & 1 & 1 & 0 & 0 & 1 & 1 & 1 & 0 & 1 & 1 & 1 & 0 \\
1 & 1 & 1 & 0 & 1 & 1 & 0 & 1 & 1 & 0 & 0 & 1 & 1 & 1 & 0 & 1 & 1 & 1 & 1 & 1 & 1 & 1 & 0 & 0 \\
1 & 1 & 1 & 0 & 1 & 1 & 1 & 0 & 1 & 0 & 1 & 0 & 1 & 1 & 1 & 0 & 1 & 1 & 0 & 0 & 1 & 1 & 1 & 1 \\
1 & 1 & 1 & 0 & 1 & 1 & 1 & 1 & 1 & 0 & 1 & 1 & 1 & 1 & 1 & 1 & 1 & 1 & 0 & 1 & 1 & 1 & 0 & 1 \\
1 & 1 & 1 & 1 & 1 & 1 & 0 & 0 & 1 & 1 & 0 & 0 & 1 & 1 & 0 & 1 & 1 & 1 & 1 & 1 & 1 & 1 & 1 & 1 \\
1 & 1 & 1 & 1 & 1 & 1 & 0 & 1 & 1 & 1 & 0 & 1 & 1 & 1 & 0 & 0 & 1 & 1 & 1 & 0 & 1 & 1 & 0 & 1 \\
1 & 1 & 1 & 1 & 1 & 1 & 1 & 0 & 1 & 1 & 1 & 0 & 1 & 1 & 1 & 1 & 1 & 1 & 0 & 1 & 1 & 1 & 1 & 0 \\
1 & 1 & 1 & 1 & 1 & 1 & 1 & 1 & 1 & 1 & 1 & 1 & 1 & 1 & 1 & 0 & 1 & 1 & 0 & 0 & 1 & 1 & 0 & 0 \\
\hline
\end{longtable}

\subsection{Proof of Propositions for Neural Information Squeezer Network}

Here we provide mathematical proves for the two propositions of the neural network framework to calculate mutual information and redundancy.

First, we rephrase the proposition 1 and then we give the proof here.\\

\textbf{Proposition 1}: For any random variables $X$ and $Y$, we can use the framework of Figure \ref{fig:nis}(a) to predict $Y$ by squeezing the information channel of $\hat{Y}_X$ as the minimum dimension but satisfying $\hat{Y}\approx Y$ and $U\sim\mathcal{N}(0,I)$. And we suppose the conditional entropy $H(X|Y)>0$ holds, then:
\begin{equation}
    H(\hat{Y}_X)\approx I(X;Y)
\end{equation}

\begin{proof}
The whole structure of the alternative NIS network (Figure \ref{fig:nis}(a)) can be regarded as the similar structure as in Ref \cite{zhang2023nis}, but the dynamics learner is absent. However, we can understand the dynamic is a fixed identical mapping. In this way, all the conclusions proved in \cite{zhang2023nis} can be applied here. Thus, we have:
\begin{equation}
\label{eqn.first_nis}
    I(X;Y)\approx I(\hat{Y}_X;\hat{Y}_X)=H(\hat{Y}_X)
\end{equation}
if all the neural networks are well trained.
The first equation holds because of Theorem 2 (information bottle-neck) and Theorem 3(mutual information of the model will be closed to the data for a well trained framework) in \cite{zhang2023nis}, the second holds when $q$ is minimized such that the information channel of $\hat{Y}_X$ is squeezed as possible as we can and because of the property of mutual information. 

Further, because $U$ is an independent Gaussian noise, therefore:

\begin{equation}
\label{eqn.noise_nis}
    H(U)=H(\psi(X))-H(\hat{Y_X})\approx H(X)-I(X;Y)=H(X|Y)
\end{equation}

The approximated equation holds because $\psi$ is a bijector which can keep the entropy unchanged, and Equation \ref{eqn.first_nis} holds. Therefore, we can prove proposition 1.
\end{proof}

To calculate the redundancy for a system with three variables we can further feed the variable of $\hat{Y}_X$ into another NIS network to predict $Z$, and narrow down the information channel of the intermediate variable $\hat{Z}_{\hat{Y}_X}$ to get the minimum dimension $q^{*'}$ for $\hat{Z}_{\hat{Y}_X}$, then its Shannon entropy can approach the redundancy, and the redundancy satisfies the property of permutation symmetry for all the variables. We can prove the following proposition:

\textbf{Proposition 2}: For a system with three random variables $X,Y,Z$, suppose the conditional information $H(X|Y)>0,H(X|Z)>0$, then the redundancy calculated by Equation \ref{eqn.nis_red} is symmetric, which means:
\begin{equation}
    Red(X,Y,Z)\approx Red(X,Z,Y)
    \end{equation}

\begin{proof}
If we accept the definition of Equation \ref{eqn.nis_red}, then:
\begin{equation}
\label{eqn.red1}
    Red(X,Y,Z)\approx H(\hat{Z}_{\hat{Y}_X})=H(\hat{Y}_X)-H(U_{\hat{Y}_X})=H(X)-H(X|Y)-H(\hat{Y}_X|Z),
\end{equation}
where $U_{\hat{Y}_X}$ is the discarded Gaussian noise to predict $\hat{Y}_Z$.

In another way, we can use $X$ to predict $Z$, and the intermediate variable $\hat{Z}_X$ can be used to predict $Y$, and the intermediate variable $\hat{Y}_{\hat{Z}_X}$ can be used to approximate the redundancy which is denoted as $Red(X,Z,Y)$. Therefore,
\begin{equation}
\label{eqn.red2}
    Red(X,Z,Y)\approx H(X)-H(X|Z)-H(\hat{Z}_X|Y).
\end{equation}

Because the discarded noise variable $U_{\hat{Y}_X}$ in the process of predicting $Y$ by $X$ is independent on all the variables, therefore:
\begin{equation}
    H(U_{\hat{Y}_X})=H(U_{\hat{Y}_X}|Z)=H(U_{\hat{Y}_X}|Y,Z)=H(X|Y,Z),
\end{equation}

Similarly, the discarded noise variable $U_{\hat{Z}_{\hat{Y}_X}}$ in the process of predicting $Z$ by $\hat{Y}_X$ is also independent on all the other variables, and $\psi(X)$ is the combination of $U_{\hat{Y}_X}$ and $\hat{Y}_X$, thus:

\begin{equation}
    H(X|Y,Z)=H(U_{\hat{Y}_X}|Z)=H(X|Z)-H(\hat{Y}_X|Z).
\end{equation}

In the same way, we can obtain:
\begin{equation}
    H(X|Z,Y)=H(U_{\hat{Z}_X}|Y)=H(X|Y)-H(\hat{Z}_X|Y).
\end{equation}
Because $H(X|Y,Z)=H(X|Z)-H(\hat{Y}_X|Z)=H(X|Y,Z)=H(X|Y)-H(\hat{Z}_X|Y)$, therefore:

\begin{equation}
    H(X|Z)+H(\hat{Z}_X|Y)=H(X|Y)+H(\hat{Y}_X|Z)
\end{equation}
and the Equation \ref{eqn.red1} and \ref{eqn.red2} lead to:
\begin{equation}
    Red(X,Y,Z)=Red(X,Z,Y).
\end{equation}

This equation is general for all the permutations of $X,Y$ and $Z$, thus, the redundancy defined in the neural network NIS satisfies permutation symmetry.

\end{proof}

\subsection{Proof of Theorem 1 with different partial order}
\label{Proof of Theorem 1 with different partial order}

To proof the Theorem \ref{Theorem:Symmetry of Redundant Information}, we can also use the Definition \ref{definition:Set Intersection of Information} with another partial order, such that $Q \sqsubset X $ if and only if $H(Q|X) = 0$.

\begin{proof}
Suppose we have a multivariate system containing a target variable $Y$ and source variables $X_{1},\cdots , X_{n}$. For the convenience of expression, we use $\mathcal{X}$ to represent all the source variables $X_{1},\cdots , X_{n}$. The proof is to show that $Red(Y: \mathcal{X},Y) = Red(Y;\mathcal{X})$ and $ Red(U: \mathcal{X}, Y) = Red(Y: \mathcal{X}, Y) $, where $U$ is the union variable of $Y$ and $\mathcal{X}$, such that $U = (\mathcal{X}, Y)$. Then, we can demonstrate that redundant information is equal regardless of which variable is chosen as the target variable.

\underline{Step One, to prove $Red(Y: \mathcal{X},Y) = Red(Y;\mathcal{X}):$} 

By Definition \ref{definition:Set Intersection of Information}, 
\begin{align}
Red(Y: \mathcal{X},Y) =\sup_{Q_{j}} \{I(Q_{j}:Y):Q_{j}\sqsubset Y, Q_{j}\sqsubset X_{i},\forall i\in \{1\cdots n\} \} 
\end{align} 
According to the Monotonicity property of redundant information (Axiom \ref{Axiom:Monotonicity}) that adding new source variables will only impose stricter restrictions on top of existing ones, and the Symmetry property of source variables (Axiom \ref{Axiom:Symmetry of source variables}) that the order in which restrictions are imposed will not affect the results, we can make this optimization problem into two steps, such that:

$\sup_{Q_{j}} \{I(Q_{j}:Y):Q_{j}\sqsubset Y, Q_{j}\sqsubset X_{i},\forall i\in \{1\cdots n\} \}$ 

$= \sup_{Q_{j},Q_{k}} \{I(Q_{j}:Y):Q_{j}\sqsubset Y, Q_{j}\sqsubset Q_{k}, Q_{k}\sqsubset X_{i},\forall i\in \{1\cdots n\} \} $

$= \sup_{Q_{k}} \{ \sup_{Q_{j}} \{I(Q_{j}:Y):Q_{j}\sqsubset Y, Q_{j}\sqsubset Q_{k} \}: Q_{k} \sqsubset X_{i},\forall i\in \{1\cdots n\} \} $

$= \sup_{Q_{k}} \{ \sup_{Q_{j}} \{H(Q_{j}):Q_{j}\sqsubset Y, Q_{j}\sqsubset Q_{k} \}: Q_{k} \sqsubset X_{i},\forall i\in \{1\cdots n\} \}$, since $Q_{j}\sqsubset Y$

$= \sup_{Q_{k}} \{I(Q_{k}:Y):Q_{k}\sqsubset X_{i},\forall i\in \{1\cdots n\}\}$, since $ \sup_{Q_{j}} \{H(Q_{j}):Q_{j}\sqsubset Y, Q_{j}\sqsubset Q_{k} \} = I(Q_{k}:Y)$. 


Therefore, $Red(Y: \mathcal{X},Y) =
Red(Y;\mathcal{X})$

\underline{Step Two, to prove $ Red(U: \mathcal{X}, Y) = Red(Y: \mathcal{X}, Y) $:}
Building upon the conclusion that $Red(Y: \mathcal{X}, Y) = Red(Y: \mathcal{X})$, we can replace the target variable with the union variable $U = (\mathcal{X}, Y)$, which combines the target variable $Y$ and the source variables $\mathcal{X}$. (The entropy of the union variable $U$ can be expressed as $H(U) = H(\mathcal{X}, Y)$.)

Firstly, let's employ the contradiction method by assuming that $Red(U: \mathcal{X}, Y) < Red(Y: \mathcal{X}, Y)$.
That means that: 
\begin{align}
\sup_{Q_{j}} \{I(Q_{j}:U):Q_{j}\sqsubset Y, Q_{j}\sqsubset X_{i},\forall i\in \{1\cdots n\} \} < \sup_{Q_{k}}\{I(Q_{k}:Y): Q_{k}\sqsubset Y, Q_{k}\sqsubset X_{i},\forall i\in \{1\cdots n\} \}
\end{align}
Let $Q_{j}^{*}$ and $Q_{k}^{*}$ satisfies or infinitely approaches the above conditions:
\begin{align}
I(Q_{j}^{*}:U) &= Red(U: \mathcal{X}, Y) - \varepsilon, \forall \varepsilon > 0 \nonumber \\ &= \sup_{Q_{j}} \{I(Q_{j}:U):Q_{j}\sqsubset Y, Q_{j}\sqsubset X_{i},\forall i\in \{1\cdots n\}\} - \varepsilon, \forall \varepsilon > 0, \nonumber 
\end{align}
\begin{align}
I(Q_{k}^{*}:Y) &= Red(Y: \mathcal{X}, Y) - \varepsilon, \forall \varepsilon > 0 \nonumber \\ &= \sup_{Q_{k}} \{I(Q_{k}:Y):Q_{k}\sqsubset Y, Q_{k}\sqsubset X_{i},\forall i\in \{1\cdots n\}\} - \varepsilon, \forall \varepsilon > 0, \nonumber 
\end{align}
Since $Y \sqsubset U$ from $U=(\mathcal{X},Y)$, we have:
\begin{align}
I(Q_{k}^{*},Y) \le I(Q_{k}^{*},U)
\end{align}
Given that $Q_{k}^{*}\sqsubset Y$ and $Q_{k}^{*}\sqsubset X_{i}$ (same with the restrictions of $Q_{j}^{*}$), the mutual information $I(Q_{j}^{*},U)$ should greater or equal to $I(Q_{k}^{*},Y)$, which lead to a contradiction. 
Consequently, we can conclude that $Red(U: \mathcal{X}, Y) \ge Red(Y: \mathcal{X}, Y)$.

Secondly, let's also use the contradiction method by assuming that $Red(U: \mathcal{X}, Y) > Red(Y: \mathcal{X}, Y)$. 
In this case:
\begin{align}
\sup_{Q_{j}} \{I(Q_{j}:U):Q_{j}\sqsubset Y, Q_{j}\sqsubset X_{i},\forall i\in \{1\cdots n\} \} > \sup_{Q_{k}}\{I(Q_{k}:Y): Q_{k}\sqsubset Y, Q_{k}\sqsubset X_{i},\forall i\in \{1\cdots n\} \}
\end{align}
Let's focus on the $Q_{j}^{*}$ and $Q_{k}^{*}$ that satisfies or infinitely approaches the above conditions. Since $Q_{j}^{*} \sqsubset Y$ and $Y \sqsubset U$ from $U=(\mathcal{X},Y) (H(Y|U)=0)$, we have:
\begin{align}
I(Q_{j}^{*}:U) = I(Q_{j}^{*}:Y)
\end{align}
That lead to a contradiction, such that $I(Q_{j}^{*}:Y) > I(Q_{k}^{*}:Y) $ with the same restriction on $Q_{j}^{*}$ and $Q_{k}^{*}$.  
Therefore, we obtain $Red(U: \mathcal{X}, Y) \le  Red(Y: \mathcal{X}, Y)$.

Since we have both $Red(U: \mathcal{X}, Y) \ge Red(Y: \mathcal{X}, Y)$ and $Red(U: \mathcal{X}, Y) \le Red(Y: \mathcal{X}, Y)$, $Red(U: \mathcal{X}, Y) = Red(Y: \mathcal{X}, Y)$ is proved.

\underline{In Summary:}
Since we have established that $Red(Y: \mathcal{X}, Y) = Red(Y: \mathcal{X})$, and $ Red(U: \mathcal{X}, Y) = Red(Y: \mathcal{X}, Y)$, we can conclude that for all $X_{i}$ in $\{\mathcal{X}\}$, $Red(X_{i}: Y, \{\mathcal{X}\} \setminus  X_{i}) = Red(Y:\{\mathcal{X}\})$. Therefore, Theorem \ref{Theorem:Symmetry of Redundant Information} is proved, and we can use $Red(X_{1},\cdots, X_{n})$ or $Red_{1\cdots n}$ denote the redundant information within the system $\{X_{1},\cdots, X_{n}\}$.
\end{proof}

\end{document}